\documentclass{article}

\usepackage[totalwidth=450pt,totalheight=640pt]{geometry}

\usepackage{graphicx}
\usepackage{color}
\usepackage{amsmath}
\usepackage{amssymb}
\usepackage{amsfonts}
\usepackage{amsthm}
\usepackage{microtype}

\newtheorem{proposition}{Proposition}

\newtheorem{lemma}{Lemma}
\newtheorem{claim}{Claim}
\newtheorem{example}{Example}

\newtheorem{theorem}{Theorem}
\newtheorem{definition}{Definition}

\newcommand{\var}{\mathsf{var}}

\newcommand{\fw}{\mathsf{fw}}
\newcommand{\ctw}{\mathsf{ctw}}
\newcommand{\sdw}{\mathsf{sdw}}
\newcommand{\tw}{\mathsf{tw}}
\newcommand{\fiw}{\mathsf{fiw}}
\newcommand{\sd}{\mathsf{sd}}

\begin{document}

\title{Circuit Treewidth, Sentential Decision, and Query Compilation}

\author{
  Simone Bova\\TU Wien\\ \texttt{simone.bova@tuwien.ac.at}
  \and
  Stefan Szeider\\TU Wien\\ \texttt{sz@ac.tuwien.ac.at}
}

\date{}

\maketitle

\begin{abstract}
The evaluation of a query over a probabilistic database 
boils down to computing the probability of a suitable Boolean function, 
the lineage of the query over the da\-ta\-ba\-se.  
The method of query compilation approaches the task  
in two stages: first, the query lineage 
is implemented (compiled) in a  
circuit form where probability computation is tractable;
and second, the desired probability is computed over the compiled circuit.  
A basic theoretical quest in query compilation is 
that of identifying pertinent classes of queries whose lineages 
admit compact representations over increasingly succinct, 
tractable circuit classes. 

Fostering previous work by Jha and Suciu \cite{DBLP:conf/icdt/JhaS12} 
and Petke and Razgon \cite{DBLP:conf/sat/RazgonP13}, we focus on queries whose lineages 
admit circuit implementations with small tree\-wid\-th, 
and investigate their compilability within tame classes 
of decision diagrams.  In perfect analogy with the characterization of 
bounded circuit pathwidth by bounded OBDD width \cite{DBLP:conf/icdt/JhaS12}, we show that a class of 
Boolean functions has bounded circuit treewidth if and only if 
it has bounded SDD width.  Sentential decision diagrams (SDDs) are central in knowledge compilation, 
being essentially as tractable as OBDDs \cite{DBLP:conf/ijcai/Darwiche11} 
but exponentially more succinct \cite{DBLP:conf/aaai/Bova16}.  By incorporating constant width 
SDDs and polynomial size SDDs, 
we refine the panorama of query compilation 
for unions of conjunctive queries with and without inequalities \cite{DBLP:conf/icdt/JhaS11,DBLP:conf/icdt/JhaS12}.
\end{abstract}

\section{Introduction}

A basic problem in database theory 
is \emph{query evaluation in probabilistic databases}: 
Given a (Boolean) query $Q$ and a probabilistic database $D$, 
where each tuple has a given probability, 
compute the probability of the lineage of $Q$ over $D$. 
The problem is computationally 
hard, even for fixed queries in simple syntactic forms \cite{DBLP:conf/pods/GradelGH98,DBLP:journals/vldb/DalviS07}.

The \emph{lineage of a (Boolean) query $Q$ over a database $D$} 
is a monotone Boolean function $L(Q,D)$ over the tuples in $D$  
that 
accepts a subset $D'$ of tuples of $D$ 
if and only if $Q$ is true in $D'$.  A standard approach to probabilistic query evaluation 
is \emph{query compilation} \cite[Chapter~5]{Suciu:2011:PD:2031527}. Here, 
the lineage $L(Q,D)$, given as a Boolean circuit, 
which is computable in polynomial time if $Q$ is fixed, 
is implemented within a succinct circuit class 
where its probability is efficiently computable. 
In other words, to avoid computing the probability of a circuit, 
which is hard, the circuit is first \emph{compiled} to a tamer form.

Circuit classes supporting tractable probability computation, 
and model counting in particular,  
are central in \emph{knowledge compilation} \cite{DBLP:journals/jair/DarwicheM02}; 
particular emphasis is posed on the hierarchy 
of deterministic decomposable circuits.  A circuit is \emph{decomposable} 
if its AND gates represent independent probabilistic events \cite{DBLP:journals/jacm/Darwiche01}, 
and \emph{deterministic} if its OR gates represent exclusive probabilistic events \cite{DBLP:journals/jancl/Darwiche01}: 
probability computation is then feasible in linear 
time on deterministic decomposable circuits.

Aimed at a detailed syntactic classification of tractable cases 
of probabilistic query evaluation, 
Jha and Suciu amply explored the compilability 
of various classes of queries 
into various classes of deterministic decomposable circuits, 
fruitfully bridging database theory and knowledge compilation \cite{DBLP:conf/icdt/JhaS11,DBLP:journals/mst/JhaS13}.  
In this context, they studied the compilability of queries whose lineages have small circuit treewidth 
into decision diagrams, OBDDs in particular \cite{DBLP:conf/icdt/JhaS12}; 
a study 
we continue in this article. 

An \emph{ordered binary decision diagram} (OBDD)  
is a deterministic read-once branching program where every path from 
the root to a leaf visits the Boolean variables in the same order \cite{DBLP:journals/tc/Bryant86}.\footnote{OBDDs are 
deterministic decomposable circuits.}   
The OBDD size of a Boolean function is the size (number of nodes) of its smallest OBDD 
implementation.  The width of an OBDD is the largest number of nodes 
labeled by the same variable, and the OBDD width of a Boolean function 
is the smallest width attained by its OBDD implementations.  

Unifying several known tractable cases of the probability computation problem,  
Jha and Suciu introduce a structural parameter for Boolean functions, 
called \emph{expression width} \cite{DBLP:conf/icdt/JhaS12} or \emph{circuit treewidth} \cite{DBLP:conf/sat/RazgonP13}, 
that measures, for any Boolean function, the smallest treewidth of a circuit computing the function.  
They show that a Boolean function of $n$ variables and circuit treewidth $k$ 
has OBDD size 
\begin{equation}\label{eq:intr-1}
n^{O(f(k))} 
\end{equation}
where $f$ is a fast growing (double exponential) function \cite{DBLP:conf/icdt/JhaS12}.  The bound is tight 
in the sense that there are Boolean functions of $n$ variables and circuit treewidth $k$ 
whose OBDD size is $n^{\Omega(k)}$ \cite{DBLP:conf/kr/Razgon14}.\footnote{The lower bound holds even for primal treewidth, 
which is (unboundedly) larger than circuit treewidth.} 

The bound (\ref{eq:intr-1}) gives polynomial size OBDD implementations for circuits of bounded treewidth, 
but the degree of the polynomial depends (badly) on the treewidth.  
However, as Jha and Suciu show \cite{DBLP:conf/icdt/JhaS12}, 
restricting to functions of small \emph{circuit pathwidth} resolves this issue.   
Indeed, a Boolean function of $n$ variables and circuit pathwidth $k$ 
has OBDD width $f(k)$, hence OBDD size
\begin{equation}\label{eq:intr-2}
O(f(k)n)\text{;}
\end{equation}
and conversely, every Boolean function of OBDD width $k$ has circuit pathwidth $O(k)$.  Therefore, 
\emph{a class of Boolean functions has bounded circuit 
pathwidth if and only if it has bounded OBDD width.}

As Jha and Suciu conclude, the quest naturally arises 
for a similar characterization of bounded circuit tree\-wi\-d\-th.  
The quest involves, for starters, identifying a circuit class ideally as tractable as OBDDs but more succinct, 
and therefore capable of matching the bound (\ref{eq:intr-2}).  Natural candidates, 
like FBDDs or even nondeterministic read-once branching programs fail  
\cite{DBLP:journals/algorithmica/Razgon16}.

The question is natural 
and nontrivial. Compared to the substantial understanding of the compilability 
of CNF circuits parameterized by treewidth, or even cliquewidth \cite{DBLP:conf/sat/BovaCMS15,DBLP:conf/sat/Mengel16}, 
the parameterized compilability of general circuits is relatively unexplored 
and poorly understood; which is unsatisfactory because, in theory, 
the circuit treewidth of a class of Boolean functions 
can be bounded on general circuits and unbounded on CNFs \cite[Example~2.9]{DBLP:conf/icdt/JhaS12}; 
and in practice, query lineages are often presented by circuits rather than by CNFs \cite{DBLP:conf/pods/GreenKT07}.  

The only bound on the size of a compilation for a circuit 
that avoids a dependence on its treewidth in the exponent,\footnote{A special case where such a compilation is available 
is that of lineages of an MSO query over databases of bounded treewidth, 
which have linear size deterministic decomposable forms \cite{DBLP:conf/icalp/AmarilliBS15,DBLP:conf/pods/AmarilliBS16}.} like in (\ref{eq:intr-2}) as opposed to (\ref{eq:intr-1}), is a compilation of \emph{size} $m$ circuits 
into \emph{decomposable} forms of size 
\begin{equation}\label{eq:intr-3}
O(g(k)m)
\end{equation}
by Petke and Razgon \cite{DBLP:conf/sat/RazgonP13}, where $g$ is an exponential function.  
This compilation, however, lacks two features that are either needed by or desirable in its 
intended application to query compilation. The crucial missing feature 
is that decomposable circuits, in the absence of determinism, do not support model counting 
(nor, then, probability computation).  
Besides, in the upper bound (\ref{eq:intr-3}), the size of the compilation depends on the size 
of the circuit, $m$, not just on the number of its variables, $n$, 
and the former can be much larger than the latter.  Indeed, 
Petke and Razgon ask whether decomposable forms of size linear in $n$  
are attainable for circuits of bounded treewidth \cite[Section~5]{DBLP:conf/sat/RazgonP13}.

\subsection*{Contribution}

In the first part of the article (Section~\ref{sect:part1}), we show that \emph{a class of Boolean functions has bounded circuit 
treewidth if and only if it has bounded SDD width}, 
which perfectly complements the aforementioned characterization of 
circuit pathwidth via OBDD width by Jha and Suciu. 

More precisely, we prove the following (Theorem~\ref{th:linear-sdd-size} and surrounding discussion).
\begin{description}
\item[Result~1.] \textit{A Boolean circuit of $n$ variables 
and treewidth $k$ has SDD width $f(k)$, 
thus SDD size
\begin{equation}\label{eq:int-4}
O(f(k)n)\text{,} 
\end{equation}
where $f$ is a triple exponential function.  
Conversely, every Boolean function of SDD width $k$ has circuit treewidth $O(k)$.}  
\end{description}

Introduced by Darwiche \cite{DBLP:conf/ijcai/Darwiche11}, \emph{sentential decision diagrams} (SDDs) 
are a relaxation of OBDDs based on a generalized form of Shannon decomposition.  
An OBDD respecting the variable ordering $x_1<x_2<\cdots<x_n$ 
takes a \emph{binary decision} of the form 
$(x_1 \wedge S_1 ) \vee (\neg x_1  \wedge S_2)$, 
where $S_1$ and $S_2$ are OBDDs respecting the variable ordering $x_2<\cdots<x_n$.  
Intuitively, based on a binary case distinction on $x_1$, 
the OBDD executes subOBDDs respecting the subordering $x_2<\cdots<x_n$.  
An SDD respecting the variable tree $T$, whose left and right subtrees split the variables in two disjoint blocks $X$ and $Y$, 
takes a \emph{sentential decision} of the form $\bigvee_{i=1}^m (P_i(X) \wedge S_i(Y) )$. 
Here, based on an $m$-ary (exhaustive and disjoint) 
case distinction on $X$, 
implemented by the $m$ SDDs $P_i(X)$ respecting the left subtree of $T$, 
the $m$ SDDs $S_i(Y)$ respecting the right subtree of $T$ are executed.

SDDs are theoretically very robust, 
being essentially as tractable as OBDDs \cite{DBLP:conf/ijcai/Darwiche11,DBLP:conf/aaai/BroeckD15} but 
exponentially more succinct \cite{DBLP:conf/aaai/Bova16}.  
They are also appealing in practice since 
an SDD compiler is reasonable to design and implement 
(as opposed to an FBDD compiler for instance, 
whose design is already fairly elusive). 
Indeed, available SDDs compilers already yield 
more succinct SDDs than OBDDs, leveraging the 
additional flexibility offered by variable trees compared to variable orders \cite{DBLP:conf/aaai/ChoiD13,DBLP:conf/ijcai/OztokD15}.  
Moreover, SDDs have canonical forms, and hence carry a natural notion of width which in particular implies, 
if bounded, linear size implementations, exactly as OBDD width does for OBDDs \cite{DBLP:conf/ijcai/Darwiche11}.  
Thus, quite remarkably, our study unveils that circuit treewidth is characterized by an 
independently introduced, theoretically solid, 
and practically useful notion of circuit width, namely, SDD width.\footnote{SDDs were not even a natural candidate as, 
until recently \cite{DBLP:conf/aaai/Bova16}, they were conjectured to be quasipolynomially simulated by OBDDs 
(personal communication with Vincent Liew).}  

\begin{figure}[t]
\begin{center}
\begin{picture}(0,0)%
\includegraphics{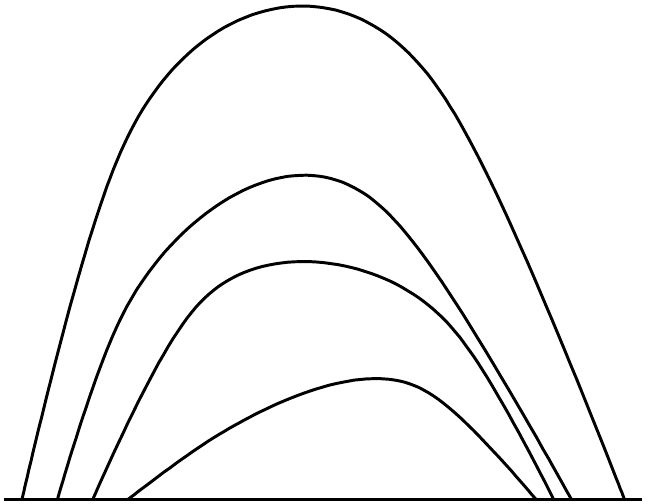}%
\end{picture}%
\setlength{\unitlength}{3729sp}%
\begingroup\makeatletter\ifx\SetFigFont\undefined%
\gdef\SetFigFont#1#2#3#4#5{%
  \reset@font\fontsize{#1}{#2pt}%
  \fontfamily{#3}\fontseries{#4}\fontshape{#5}%
  \selectfont}%
\fi\endgroup%
\begin{picture}(3284,2549)(6189,-10343)
\put(7164,-9063){\makebox(0,0)[lb]{\smash{{\SetFigFont{6}{7.2}{\familydefault}{\mddefault}{\updefault}{\color[rgb]{0,0,0}$\mathrm{OBDD}(n^{O(1)})$}%
}}}}
\put(7793,-8476){\makebox(0,0)[lb]{\smash{{\SetFigFont{6}{7.2}{\familydefault}{\mddefault}{\updefault}{\color[rgb]{0,0,0}$\mathrm{SDD}(n^{O(1)})$}%
}}}}
\put(7111,-10231){\makebox(0,0)[lb]{\smash{{\SetFigFont{6}{7.2}{\familydefault}{\mddefault}{\updefault}{\color[rgb]{0,0,0}$\mathrm{OBDD}(O(1))=\mathrm{CPW}(O(1))$}%
}}}}
\put(7059,-9650){\makebox(0,0)[lb]{\smash{{\SetFigFont{6}{7.2}{\familydefault}{\mddefault}{\updefault}{\color[rgb]{0,0,0}$\mathrm{SDD}(O(1))=\mathrm{CTW}(O(1))$}%
}}}}
\end{picture}%
\end{center}
\caption{Boolean functions.}
\label{fig:BF}
\end{figure} 

Figure~\ref{fig:BF} illustrates the compilability panorama for Boolean 
functions relative to bounded circuit pathwi\-dth/OBDD width, 
bounded circuit treewidth/SDD wi\-dth, and polynomial OBDD and SDD size.  
The class $\mathrm{OBDD}(f(n))$ contains all Boolean functions 
of OBDD wi\-dth $f(n)$, and similarly for SDDs; the class $\mathrm{CTW}(f(n))$ 
contains all Boolean functions of circuit treewidth $f(n)$, 
and similarly for CPW and circuit pathwidth.  We have 
\begin{align*}
\mathrm{CPW}(O(1)) &= \mathrm{OBDD}(O(1)) & \text{\cite{DBLP:conf/icdt/JhaS12}}\\
                   &\subsetneq \mathrm{CTW}(O(1))& \text{\cite{DBLP:conf/icdt/JhaS12}}\\
                   & =\mathrm{SDD}(O(1))& \text{Result~1}\\
                   & \subsetneq \mathrm{OBDD}(n^{O(1)})& \text{\cite{DBLP:conf/icdt/JhaS12}}\\
                   & \subsetneq \mathrm{SDD}(n^{O(1)})& \text{\cite{DBLP:conf/aaai/Bova16}}
\end{align*}

Jha and Suciu leave open the question whether the circuit treewidth of a Boolean function is computable \cite[Section~6]{DBLP:conf/icdt/JhaS12}.  
Using the fact that satisfiability of MSO-sentences is decidable on graphs of bounded treewidth \cite{DBLP:journals/apal/Seese91}, 
we answer the question positively (Proposition~\ref{prop:seese}).

\begin{description}
\item[Result~2.] \textit{The circuit treewidth of a Boolean function is computable.}  
\end{description}

\begin{figure}[t]
\begin{center}
\begin{picture}(0,0)%
\includegraphics{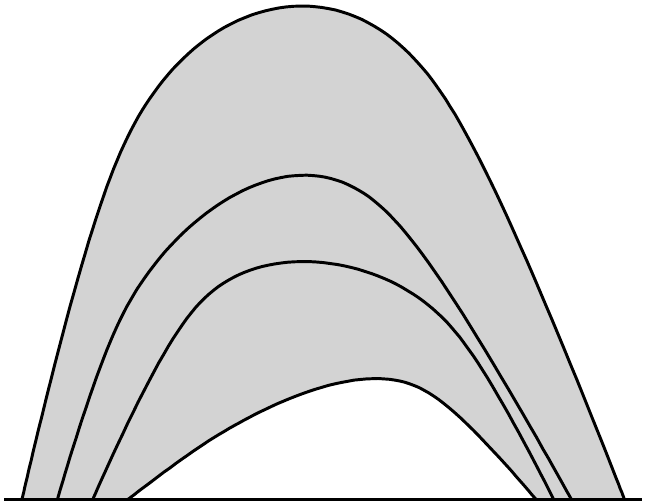}%
\end{picture}%
\setlength{\unitlength}{3729sp}%
\begingroup\makeatletter\ifx\SetFigFont\undefined%
\gdef\SetFigFont#1#2#3#4#5{%
  \reset@font\fontsize{#1}{#2pt}%
  \fontfamily{#3}\fontseries{#4}\fontshape{#5}%
  \selectfont}%
\fi\endgroup%
\begin{picture}(3284,2549)(6189,-10343)
\put(7762,-8479){\makebox(0,0)[lb]{\smash{{\SetFigFont{6}{7.2}{\familydefault}{\mddefault}{\updefault}{\color[rgb]{0,0,0}$\mathrm{SDD}(n^{O(1)})$}%
}}}}
\put(7059,-9650){\makebox(0,0)[lb]{\smash{{\SetFigFont{6}{7.2}{\familydefault}{\mddefault}{\updefault}{\color[rgb]{0,0,0}$\mathrm{SDD}(O(1))=\mathrm{CTW}(O(1))$}%
}}}}
\put(7111,-10231){\makebox(0,0)[lb]{\smash{{\SetFigFont{6}{7.2}{\familydefault}{\mddefault}{\updefault}{\color[rgb]{0,0,0}$\mathrm{OBDD}(O(1))=\mathrm{CPW}(O(1))$}%
}}}}
\put(7164,-9063){\makebox(0,0)[lb]{\smash{{\SetFigFont{6}{7.2}{\familydefault}{\mddefault}{\updefault}{\color[rgb]{0,0,0}$\mathrm{OBDD}(n^{O(1)})$}%
}}}}
\end{picture}%
\end{center}
\caption{Lineages of UCQs. The gray region is empty.}
\label{fig:ucq}
\end{figure} 

\medskip\noindent In the second part of the article (Section~\ref{sect:part2}), we study the implications of our compilability results in query compilation, 
refining the picture drawn by Jha and Suciu for unions of conjunctive queries (UCQs) with and without inequalities \cite{DBLP:conf/icdt/JhaS11,DBLP:conf/icdt/JhaS12}.  

We prove the following statement (Theorem~\ref{th:main-ineq}).

\begin{description}
\item[Result~3.] \textit{A union of conjunctive queries  
with or without inequalities containing inversions 
has lineages of exponential deterministic structured size.}  
\end{description}

Introduced by Dalvi and Suciu \cite{DBLP:conf/pods/DalviS07a}, \emph{inversion freeness} is a syntactic 
property of UCQs and UCQs with inequalities that implies compilability of their lineages in 
constant width (linear size) OBDDs and polynomial size OBDDs \cite{DBLP:conf/icdt/JhaS11,DBLP:conf/icdt/JhaS12}.  
On the other hand, if a query contains inversions, 
then it has lineages with large OBDD \cite{DBLP:conf/icdt/JhaS11,DBLP:conf/icdt/JhaS12}, 
and even SDD \cite{DBLP:conf/uai/BeameL15}, implementations.

\emph{Structuredness} is a strong form of decomposability where not only 
for every AND gate the circuits leading into the gate are defined on disjoint sets of variables \cite{DBLP:journals/jacm/Darwiche01}, 
but their variables are partitioned accordingly to an underlying variable tree \cite{DBLP:conf/aaai/PipatsrisawatD08}.  

As alluded in their informal description above, 
SDDs as well as OBDDs are special deterministic structured forms.  
Therefore Result~3 formally generalizes analogous 
previous incompilability results for SDDs and OBDDs \cite{DBLP:conf/uai/BeameL15, DBLP:conf/icdt/JhaS11}.  
The proof has the main and 
sole merit to 
combine proof ideas of Jha and Suciu together 
with lower bound techniques for deterministic structured circuits 
based on single partition communication complexity \cite{DBLP:conf/uai/BeameL15,BovaEtAlIJCA16}. 

A careful inspection of the proof shows that  
Result~3 also exponentially separates disjunctive normal forms (DNFs), 
and even prime implicant forms (IPs), from structured deterministic negation normal forms (NNFs).  
In this interpretation Result~3 settles a special case of the much harder problem 
of separating DNFs (and IPs) and deterministic decomposable NNFs (d-DNNFs); 
which, 
thanks to the recently established separation of decomposable NNFs (DNNFs) and d-DNNFs \cite{BovaEtAlIJCA16}, 
is the last open question about the relative succinctness of the compilation languages 
considered in the classic article by Darwiche and Marquis \cite{DBLP:journals/jair/DarwicheM02}.

Figure~\ref{fig:ucq} and Figure~\ref{fig:ucqneq} give an overview of 
query compilability for lineages of UCQs with and without inequalities.  For lineages of UCQs we have that
\begin{align*}
\mathrm{OBDD}(O(1)) &= \mathrm{SDD}(O(1)) \\
                   &= \mathrm{OBDD}(n^{O(1)})=\mathrm{SDD}(n^{O(1)})
\end{align*}
because by Result~3 inversions imply large structured deterministic forms, 
hence large SDDs; on the other hand, inversion freeness implies constant width OBDDs \cite{DBLP:conf/icdt/JhaS11}, 
so that $\mathrm{SDD}(n^{O(1)})\setminus \mathrm{OBDD}(O(1))=\emptyset$.  

The picture for lineages of UCQs with inequalities is
\begin{align*}
\mathrm{OBDD}(O(1)) &\subseteq \mathrm{SDD}(O(1))\\ 
                   &\subsetneq \mathrm{OBDD}(n^{O(1)}) =\mathrm{SDD}(n^{O(1)})
\end{align*}
because again Result~3 implies large SDDs in the presence of inversions, 
and inversion freeness implies polynomial size OBDDs \cite{DBLP:conf/icdt/JhaS12};
thus $\mathrm{SDD}(n^{O(1)})\setminus \mathrm{OBDD}(n^{O(1)})=\emptyset$. 
In a symmetric fashion, Jha and Suciu conjecture that, 
for lineages of UCQs with inequalities, 
it also holds that $\mathrm{SDD}(O(1))\setminus \mathrm{OBDD}(O(1))=\emptyset$.

\begin{figure}[t]
\begin{center}
\begin{picture}(0,0)%
\includegraphics{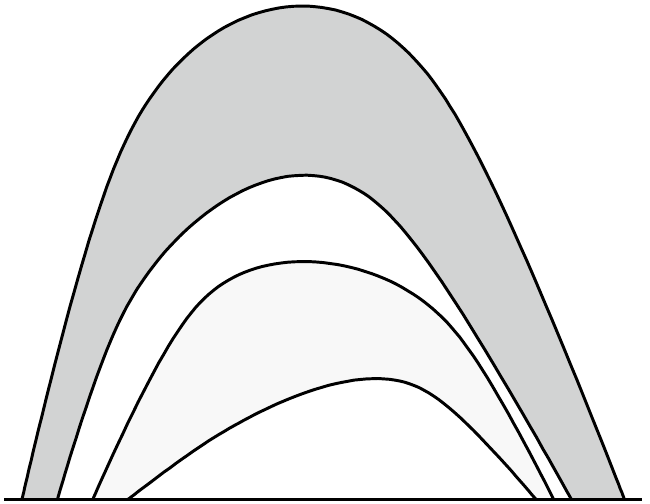}%
\end{picture}%
\setlength{\unitlength}{3729sp}%
\begingroup\makeatletter\ifx\SetFigFont\undefined%
\gdef\SetFigFont#1#2#3#4#5{%
  \reset@font\fontsize{#1}{#2pt}%
  \fontfamily{#3}\fontseries{#4}\fontshape{#5}%
  \selectfont}%
\fi\endgroup%
\begin{picture}(3284,2549)(6189,-10343)
\put(7762,-8479){\makebox(0,0)[lb]{\smash{{\SetFigFont{6}{7.2}{\familydefault}{\mddefault}{\updefault}{\color[rgb]{0,0,0}$\mathrm{SDD}(n^{O(1)})$}%
}}}}
\put(7164,-9063){\makebox(0,0)[lb]{\smash{{\SetFigFont{6}{7.2}{\familydefault}{\mddefault}{\updefault}{\color[rgb]{0,0,0}$\mathrm{OBDD}(n^{O(1)})$}%
}}}}
\put(7111,-10231){\makebox(0,0)[lb]{\smash{{\SetFigFont{6}{7.2}{\familydefault}{\mddefault}{\updefault}{\color[rgb]{0,0,0}$\mathrm{OBDD}(O(1))=\mathrm{CPW}(O(1))$}%
}}}}
\put(7059,-9650){\makebox(0,0)[lb]{\smash{{\SetFigFont{6}{7.2}{\familydefault}{\mddefault}{\updefault}{\color[rgb]{0,0,0}$\mathrm{SDD}(O(1))=\mathrm{CTW}(O(1))$}%
}}}}
\end{picture}%
\end{center}
\caption{Lineages of UCQs with inequalities. The gray region is empty, 
and the light gray region is conjectured empty.}
\label{fig:ucqneq}
\end{figure}

\subsection*{Discussion}

Our bound (\ref{eq:int-4}) amounts to a vast improvement of the available bounds.  Compared to (\ref{eq:intr-1}), 
it attains linear size versus (large degree) polynomial size compilation on bounded circuit treewidth classes.  
Compared to (\ref{eq:intr-3}), it answers abundantly 
the quest for linear size decomposable forms for Boolean functions of bounded treewidth: 
our forms are not just linear size and decomposable, 
but even deterministic and structured.  

Pushing the dependency of the compilation size down from $m$ (the size of the given circuit, as in (\ref{eq:intr-3})) to $n$ (the number of its inputs, as in (\ref{eq:int-4})) 
required an entirely new compilation idea.\footnote{This aspect of the bound, 
at first sight pedantic, is indeed relevant in query compilation, where 
the number $n$ of Boolean variables of the query lineage $L(Q,D)$ is the 
(large) number of tuples in the database $D$, and $m$ is the size of the circuit implementation of $L(Q,D)$.  
Roughly $m=O(n^{q})$, where $q$ is the size of the query $Q$. 
Hence, avoiding a dependence on $m$ means obtaining a bound where the degree of the polynomial is a universal constant, 
not just independent of the circuit treewidth of $L(Q,D)$, but also independent of $Q$.}

The idea used by Petke and Razgon \cite{DBLP:conf/sat/RazgonP13} to obtain (\ref{eq:intr-3}) 
was the following. Given a circuit $C(X)$ of $n=|X|$ variables and $m=|Z|$ gates, 
first compute its Tseitin CNF $T(X,Z)$; the circuit treewidth of the latter is (linearly) related to the former.  
To obtain a decomposable form for $C(X)$, 
existentially quantify the (gate) variables $Z$ in a decomposable form $D_T(X,Z)$ for $T(X,Z)$:
$$C(X) \equiv (\exists Z)D_T(X,Z)$$
This indirect approach via Tseitin forms, however, 
introduces two critical issues.  On the one hand, the size of $D_T(X,Z)$ depends on $|Z|=m$, 
so that the size of the compiled form will depend on $m$ 
(as opposed to depending on $n$ only).  On the other hand, for 
$(\exists Z)D_T(X,Z)$ to have size polynomial in that of $D_T(X,Z)$, 
the latter cannot be deterministic \cite{DBLP:journals/jair/DarwicheM02}; 
hence the resulting compiled form will not be deterministic either.

Our compilation approach avoids Tseitin forms and 
compiles the circuit directly; it relies on a new insight on Boolean functions, 
and more specifically on the combinatorics of their \emph{subfunctions} or \emph{cofactors} \cite{WegenerBook}. 

Dually to the notion of cofactor, 
we introduce (Definition~\ref{def:factor}) the notion of \emph{factor} of a Boolean function $F(Y,Y')$, 
that is, a function $G(Y)$ 
whose models correspond exactly to the assignments of $Y$ 
generating some cofactor of $F$.  
We then show (Lemma~\ref{lemma:factors}) that first, the rectangle $R$ formed by multiplying any two factors $G(Y)$ and $G'(Y')$ 
of $F(Y,Y')$ is either disjoint from $F$, that is $F \wedge R\equiv \bot$, or contained in $F$, that is $R \models F$.\footnote{The 
product of $G(Y)$ and $G'(Y')$ is a Boolean function over $Y \cup Y'$ 
whose models are exactly those assignments of $Y \cup Y'$ 
whose restriction to $Y$ ($Y'$) models $G$ ($G'$).}  
And second (Lemma~\ref{lemma:drc}), the pairs of factors 
$G(Y)$ and $G'(Y')$  of $F(Y,Y')$ satisfying the latter 
condition, call them implicants, form a disjoint rectangle cover of $F$, that is, 
$$F \equiv \bigvee_{\text{$(G,G')$ implicant}}( G(Y) \wedge G'(Y') )\text{,}$$
where the disjunction is deterministic and the conjunctions are decomposable (structured, indeed).

We then elaborate on the main technical lemma of Jha and Suciu \cite[Lemma~2.12]{DBLP:conf/icdt/JhaS12} 
to turn the above structural insight into a compilation of small size.  
We show that a circuit of small treewidth computing a function $F$ 
naturally delivers a variable tree where the number of cofactors of $F$ 
generated by assigning the variables below every node in the tree is small (Lemma~\ref{lemma:jhasuciu}).  
It follows that the disjoint rectangle covers described above are small for every factor of $F$; 
we then obtain the desired compilation by an inductive construction up the variable tree (Lemma~\ref{lemma:dsdnnf-sub} 
and Theorem~\ref{th:factorized-forms-main}).

Indeed Result~1 is proved for a more basic canonical 
deterministic structured class of circuits (Theorem~\ref{th:factorized-forms-main} 
and Proposition~\ref{prop:fiw-ctw}), which is equivalent to SDDs as far as the boundedness of their widths, 
and reduces to OBDDs in the special case of linear variable trees; 
the class is of independent interest and gives a fresh structural 
insight into SDDs (see also the conclusion).

Our construction, 
significantly shorter to describe and easier to analyze than its precursors, 
effectively encompasses the construction by Jha and Suciu in that, if carried out in the special case of circuit pathwidth, 
it compiles a circuit of $n$ variables and pathwidth $k$ 
into an OBDD (not just an SDD) of width $f(k)$ and size $O(f(k)n)$.

\subsection*{Organization}

The article is organized as follows.  The required 
notions from knowledge compilation and communication complexity 
are given in Section~\ref{sect:prel}. The part of the article devoted 
to circuit treewidth and sentential decision (Section~\ref{sect:part1}) 
deals first with the introduction and the development of the notion 
of factor and factor width for a Boolean function, 
and the relation of the latter with circuit treewidth (Section~\ref{sect:fwbf}); 
next, it presents the actual compilations in canonical 
deterministic structured NNFs and SDDs (Section~\ref{sect:impl-sdnnf}).
Section~\ref{sect:part2} is devoted to query compilation.  We present 
questions and directions for future research in Section~\ref{sect:concl}.

\section{Preliminaries}\label{sect:prel}

For every integer $n \geq 1$, we let $[n]=\{1,\ldots,n\}$.  
We refer the reader to a standard source for the 
notions of treewidth, tree decomposition, and nice tree decomposition \cite{Kloks}.

\subsection{Circuits, Determinism, Structuredness}

We consider (Boolean) circuits over the standard basis, 
namely DAGs whose non-source nodes, called internal gates, 
are unbounded fanin conjunction ($\wedge$) 
and disjunction ($\vee$) gates and fanin $1$ negation ($\neg$) gates, 
and whose source nodes, called input gates, are pairwise distinct variables 
or constants ($\bot$ and $\top$).  A designated sink node is called the output gate.  
A circuit is in \emph{negation normal form}, in short an \emph{NNF}, 
if its negation gates are only wired by input gates.  
The size $|C|$ of a circuit $C$ is the number of its gates. 

Let $X$ be a finite set of variables.  A circuit $C$ over $X$ 
is a circuit whose input gates are labelled by variables in~$X$ or by constants.    
A circuit $C$ on $X$ computes a Boolean function $F_C=F_C(X)$ over the Boolean variables $X$, 
$$F_C \colon \{0,1\}^X \to \{0,1\}\text{,}$$ 
in the usual way. We let $$\mathsf{sat}(C)=\mathsf{sat}(F_C)=F_C^{-1}(1) \subseteq \{0,1\}^X$$ denote 
the models of $C$ and $F_C$.  Two circuits $C$ and $C'$ over $X$ are equivalent, 
in symbols $C \equiv C'$, if $\mathsf{sat}(C)=\mathsf{sat}(C')$; we also write $C \equiv F$ or say that $C$ computes $F$ if $F_C=F$.  

For a gate $g$ in a circuit $C$ over $X$, 
we let $C_g$ denote the subcircuit of $C$ rooted at $g$.  In particular, $C_g=C$ if $g$ is the output gate of $C$.  
For a circuit $C$ over variables $X$ and a gate $g \in C$, 
we let $\mathsf{var}(C_g) \subseteq X$ denote the variables appearing at input gates of $C_g$.

Let $g$ be an $\vee$-gate in a circuit $C$, 
and let $h$ and $h'$ be two distinct gates wiring $g$ in $C$.  Then $g$ is called \emph{deterministic} if 
$\mathsf{sat}(C_{h}) \cap \mathsf{sat}(C_{h'})=\emptyset$, 
viewing each circuit involved in the equation as a circuit over $\mathsf{var}(C)$. 
The determinism of $g$ in $C$ implies that the two subcircuits $C_{h}$ and $C_{h'}$ 
are \lq\lq independent\rq\rq\ in the sense that 
$$|\mathsf{sat}(C_{h} \vee C_{h'})|=|\mathsf{sat}(C_{h})|+|\mathsf{sat}(C_{h'})|\text{,}$$
where each circuit involved in the equation is viewed as a circuit over $\mathsf{var}(C)$. 
A circuit where all $\vee$-gates are deterministic is called \emph{deterministic}. 

Let $Y$ be a finite nonempty set of variables.  
A \emph{variable tree} (in short, a \emph{vtree}) for the variable set $Y$ is a 
rooted, 
ordered, binary tree $T$ whose leaves correspond bijectively to $Y$; for simplicity, 
we identify each leaf in $T$ with the variable in $Y$ it corresponds to.  For technical convenience, 
we slightly relax the standard definition not requiring for a vtree to be a full binary tree. 

For every internal node $v$ of the vtree $T$ with two children, we let
$v_l$ and $v_r$ denote resp.\ the left and right child of $v$. 
Moreover, we denote by $T_{v}$ the subtree of $T$ rooted at node $v$, 
and by $Y_v \subseteq Y$ (the variables corresponding to) the leaves of $T_v$.

Let $C$ be a circuit over the variable set $X$, 
and let $T$ be a vtree for the variable set $Y$.  
Let $g$ be a fanin $2$ $\wedge$-gate in $C$, 
having wires from gates $h$ and $h'$, 
and let $v \in T$ have two children $v_l$ and $v_r$. 
We say that $g$ is \emph{structured by $v$} if 
$\mathsf{var}(C_{h}) \subseteq Y_{v_l}$ 
and $\mathsf{var}(C_{h'}) \subseteq Y_{v_r}$.  
We say that $C$ \emph{structured by $T$} if each $\wedge$-gate in $C$ 
(has fanin $2$ and) is structured by some node in $T$.  
A circuit is called \emph{structured} if it is structured by some vtree.  

A class of structured NNFs is \emph{canonical} if, 
for every Boolean function $F(X)$ and vtree $T(Y)$ with $X \subseteq Y$, 
if two circuits $C$ and $C'$ in the class both compute $F$ and are structured by $T$, 
then they are syntactically equal (not just semantically equivalent).

Note that if a gate $g \in C$ is structured, then it is also \emph{decomposable}, 
i.e., every two distinct gates $h$ and $h'$ wiring $g$ satisfy $\mathsf{var}(C_{h}) \cap \mathsf{var}(C_{h'})=\emptyset$. 
The decomposability of $g$ in $C$ implies that the two subcircuits $C_{h}$ and $C_{h'}$ 
are \lq\lq independent\rq\rq\ in the sense that 
$$|\mathsf{sat}(C_{h} \wedge C_{h'})|=|\mathsf{sat}(C_{h})| \cdot |\mathsf{sat}(C_{h'})|\text{,}$$
when viewing each circuit involved in the equation as a circuit over its own variables.   

A \emph{sentential decision diagram}, in short \emph{SDD}, 
is a deterministic structured NNF $C$ over $X$ of the form 
\begin{equation}\label{eq:sdd-def}
\bigvee_{i \in [m]}( P_i \wedge S_i )\text{,}
\end{equation}
structured by a vtree $T(Y)$, $X \subseteq Y$, such that the following holds.  
There exists a node $v \in T$ with two children $w$ and $w'$ 
structuring each $\wedge$-gate appearing in (\ref{eq:sdd-def}), 
the $P_i$'s are SDDs over $Y_{w}$ structured by $T_w$, 
the $S_i$'s are SDDs over $Y_{w'}$ structured by $T_{w'}$,
and moreover:
\begin{itemize}
\item[(1)] $\bigvee_{i \in [m]}P_i \equiv \top$;
\item[(2)] $P_i \wedge P_j \equiv \bot$ for all $i \neq j$ in $[m]$.
\end{itemize}
Constants ($\bot$ and $\top$) are SDDs (over any variable set) structured by any vtree, 
and a literal ($x$ or $\neg x$) is an SDD (over any variable set containing $x$) structured by any vtree containing $x$.  
SDDs become canonical forms if, in addition to the above, the following holds \cite{DBLP:conf/ijcai/Darwiche11}:
\begin{itemize}
\item[(3)] $S_i \not\equiv S_j$ for all $i \neq j$ in $[m]$.
\end{itemize}

\subsection{Rectangles, Covers, Complexity}

Let $X$ be a finite set of variables.  
A partition of $X$ is a sequence of pairwise disjoint subsets (blocks) of $X$ whose union is $X$.  
Let $(X_1,X_2)$ be a partition of $X$.  For $b_1 \colon X_1 \to \{0,1\}$ 
and $b_2 \colon X_2 \to \{0,1\}$, we let $b_1 \cup b_2 \colon X_1 \cup X_2 \to \{0,1\}$ 
denote the assignment of $X$ whose restriction to $X_i$ equals $b_i$ for $i=1,2$.  
Also, for $B_1 \subseteq \{0,1\}^{X_1}$ and $B_2 \subseteq \{0,1\}^{X_2}$, 
we let $B_1 \times B_2=\{ b_1 \cup b_2 \colon b_1 \in B_1, b_2 \in B_2 \}$. 
A \emph{(combinatorial) rectangle} over $X$ 
is a Boolean function $R=R(X)\colon \{0,1\}^X \to \{0,1\}$ over the Boolean variables $X$   
such that there exist a partition $(X_1,X_2)$ of $X$  
and Boolean functions $R_i \colon \{0,1\}^{X_i} \to \{0,1\}$ for $i=1,2$ such that 
$\mathsf{sat}(R) = \mathsf{sat}(R_1) \times \mathsf{sat}(R_2)$.  
We also call a subset $S$ of $\{0,1\}^X$ a rectangle over $X$, 
with underlying partition $(X_1,X_2)$, if there exists a rectangle $R \colon \{0,1\}^X \to \{0,1\}$, 
with underlying partition $(X_1,X_2)$, such that $S=\mathsf{sat}(R)$.

Let $F=F(X)$ be a Boolean function over the Boolean variables $X$.  
A finite set $\{R_i \colon i \in [m]\}$ of rectangles over $X$ 
is called a \emph{rectangle cover} of $F$ if
\begin{equation}\label{eq:rect-cover}
\mathsf{sat}(F) = \bigcup_{i \in [m]} \mathsf{sat}(R_i) \text{;} 
\end{equation}
the rectangle cover is called \emph{disjoint} 
if the union in (\ref{eq:rect-cover}) is disjoint. 
Disjoint rectangle covers and deterministic structured NNFs 
are tightly related. 

\begin{theorem}\cite{DBLP:conf/aaai/PipatsrisawatD10,BovaEtAlIJCA16}\label{th:struct}
Let $C$ be a (deterministic) structured NNF computing a function $F=F(X)$ and respecting a vtree $T$ for $X$.  
For every node $v \in T$, 
$F$ has a (disjoint) rectangle cover of size at most $|C|$ where each rectangle has underlying partition $(X_v,X\setminus X_v)$.
\end{theorem}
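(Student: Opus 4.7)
The plan is to prove both claims by a single induction on $C$. For each gate $g\in C$, I will construct a (disjoint) rectangle cover $\mathcal{R}(g)$ of $F_g$, viewed as a function over $\var(C_g)$, with underlying partition $(\var(C_g)\cap X_v,\ \var(C_g)\setminus X_v)$, and of size at most $|C_g|$. Specializing to the output gate and padding each rectangle with the (possibly missing) variables in $X\setminus\var(C)$, which preserves rectangularity and partition, gives the theorem.

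At input gates (constants and literals) $F_g$ is itself a rectangle for any partition, so $|\mathcal{R}(g)|=1$. At a fanin-$2$ AND gate $g=h\wedge h'$ structured by some $u\in T$, I split on the position of $v$ relative to $u$ in $T$. If $v$ equals, lies below, or lies outside $T_u$, then $\var(C_g)$ is entirely inside or entirely outside $X_v$, so $F_g$ is a single rectangle with a trivial side and $|\mathcal{R}(g)|=1$. Otherwise $v$ is a strict descendant of exactly one child of $u$, say $u_l$; then by structuredness $\var(C_{h'})\subseteq Y_{u_r}$ is disjoint from $X_v$, while $\var(C_h)\subseteq Y_{u_l}$ may straddle the cut. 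Writing the IH cover of $F_h$ as $\mathcal{R}(h)=\{R_i=R_i^A\wedge R_i^B\}$, I set $\mathcal{R}(g):=\{R_i^A\wedge (R_i^B\wedge F_{h'})\}_i$; this is a (disjoint) cover of $F_g$ with the required partition, of size $|\mathcal{R}(h)|\leq|C_h|\leq|C_g|$. The symmetric case ($v$ below $u_r$) is identical.

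At an OR gate $g=\bigvee_i h_i$, I set $\mathcal{R}(g):=\bigcup_i\mathcal{R}(h_i)$, after extending each rectangle from $\var(C_{h_i})$ to $\var(C_g)$ by leaving the extra variables free (which preserves both rectangularity and the prescribed partition). The size bound $|\mathcal{R}(g)|\leq\sum_i|C_{h_i}|\leq|C_g|$ is immediate. For the disjoint claim, disjointness within each $\mathcal{R}(h_i)$ comes from IH; disjointness across distinct $i,j$ uses determinism of $g$, which says $\sat(C_{h_i})\cap\sat(C_{h_j})=\emptyset$ as subsets of $\{0,1\}^{\var(C)}$: any common element of the extended rectangles inside $\{0,1\}^{\var(C_g)}$ would, after any further extension to $\var(C)$, lie in both $\sat(C_{h_i})$ and $\sat(C_{h_j})$, a contradiction.

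The main bookkeeping obstacle is keeping the partition consistent as covers travel upward. At AND gates this relies on the case analysis showing that at most one child can be split by $v$ while the other sits entirely on one side of the cut, which is the only place where structuredness (and not merely decomposability) is used. At OR gates the subtlety is that disjointness must be argued in the ambient space $\{0,1\}^{\var(C_g)}$, forcing one to lift witnesses to the full variable set $\var(C)$; this lifting is precisely where the determinism hypothesis enters the argument.
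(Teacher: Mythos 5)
Your induction is well designed in most respects: the case analysis at $\wedge$-gates correctly isolates the single child that can straddle the cut $(X_v,X\setminus X_v)$ (this is indeed the one place where structuredness, and not mere decomposability, is needed), and your lifting of disjointness witnesses to $\{0,1\}^{\var(C)}$ at $\vee$-gates correctly matches the definition of determinism used in the paper. The genuine gap is the size accounting at $\vee$-gates. You assert that $\sum_i |C_{h_i}|\leq |C_g|$ is immediate, but the circuits here are DAGs, and the subcircuits $C_{h_1},\ldots,C_{h_m}$ rooted at the children of an $\vee$-gate may share gates, in which case $\sum_i |C_{h_i}|$ can exceed $|C_g|$. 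Worse, the overcount does not collapse into duplicate rectangles: if $g=(A\wedge B_1)\vee(A\wedge B_2)$ with $A$ a large shared subcircuit straddled by $v$ and $B_1\not\equiv B_2$ living on the other side of the cut, your construction yields the rectangles $R_i^A\wedge(R_i^B\wedge F_{B_1})$ and $R_i^A\wedge(R_i^B\wedge F_{B_2})$ for every $i$, which are pairwise distinct, so $|\mathcal{R}(g)|=2|\mathcal{R}(A)|$ while $|C_g|$ grows only additively. Iterating this sharing pattern across levels produces covers of size exponential in $|C|$, so the failure is not a constant-factor slack; the inductive bookkeeping breaks exactly in the regime (reuse of subcircuits) where NNFs are interesting.

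The proofs in the cited sources avoid this by building the cover globally rather than recursively: to each satisfying assignment $a$ one associates a certificate (proof tree) of $C$ on $a$, and within that certificate the unique gate responsible for assembling the variables of $X_v$; grouping the assignments by that gate yields at most one rectangle per gate of $C$, hence at most $|C|$ rectangles, no matter how many parents a gate has, and determinism makes certificates unique, which gives disjointness. If you wish to salvage a bottom-up induction, you would need to index rectangles by gates and show that a shared gate contributes the \emph{same} rectangle through all of its parents; the example above shows this is false for your construction, because the rectangles inherited from a shared child are multiplied by different sibling functions at different parents. A global, certificate-indexed argument is therefore needed to obtain the stated bound.
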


Let $F(X)$ be a Boolean function, 
and let $(X_1,X_2)$ be a partition of $X$ where $|X_1|=|X_2|=n$.  
The \emph{communication matrix} 
of $F$ relative to $(X_1,X_2)$, denoted by $\mathsf{cm}(F,X_1,X_2)$  is a 
Boolean matrix whose rows and columns are indexed by 
Boolean assignments of $X_1$ and $X_2$, resp., 
and whose $(b_1,b_2)$th entry equals $F(b_1 \cup b_2)$.  
We regard communication matrices as matrices over the reals.

A basic fact in communication complexity is that the rank of the communication matrix is a lower bound on the size 
of disjoint rectangle covers of a function. 

\begin{theorem}\cite[Section~4.1]{Jukna}\label{th:rklb}
Let $(X_1, X_2)$ be a partition of the variables of a function $F$, where $|X_1|=|X_2|=n$. 
Every disjoint rectangle cover of $F$ into rectangles with underlying partition $(X_1, X_2)$ 
contains at least $\mathsf{rank}(\mathsf{cm}(F, X_1, X_2))$ rectangles.
\end{theorem}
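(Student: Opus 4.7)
The plan is to exploit the fact that the rank of a matrix is subadditive over the reals and that the communication matrix of a rectangle has rank at most one. The disjointness of the cover is what makes the argument go through cleanly, since it lets a Boolean disjunction be replaced by an ordinary sum with no double counting.

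First, given a disjoint rectangle cover $\{R_i : i \in [m]\}$ of $F$ with underlying partition $(X_1,X_2)$, I would observe that the disjointness exactly says that at every point $b \in \{0,1\}^X$ at most one of the rectangles evaluates to $1$. Hence the pointwise sum of the indicator functions of the $R_i$'s equals the indicator function of $F$, and lifting this identity to the level of communication matrices yields
\[
\mathsf{cm}(F,X_1,X_2) \;=\; \sum_{i=1}^{m} \mathsf{cm}(R_i,X_1,X_2)
\]
as an equation of matrices over $\mathbb{R}$.

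Second, I would show that each $\mathsf{cm}(R_i,X_1,X_2)$ has rank at most one. Writing $R_i = S_i^1 \times S_i^2$ with $S_i^1 \subseteq \{0,1\}^{X_1}$ and $S_i^2 \subseteq \{0,1\}^{X_2}$, a direct inspection of its entries identifies $\mathsf{cm}(R_i,X_1,X_2)$ with the outer product $u_i \cdot v_i^{\top}$, where $u_i \in \{0,1\}^{\{0,1\}^{X_1}}$ and $v_i \in \{0,1\}^{\{0,1\}^{X_2}}$ are the characteristic vectors of $S_i^1$ and $S_i^2$.

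Finally, I would combine the two observations via the subadditivity of real matrix rank to conclude
\[
\mathsf{rank}(\mathsf{cm}(F,X_1,X_2)) \;\leq\; \sum_{i=1}^{m} \mathsf{rank}(\mathsf{cm}(R_i,X_1,X_2)) \;\leq\; m\text{,}
\]
which is precisely the claimed lower bound on $m$. I do not expect a real obstacle: the only point that requires care is the passage from the Boolean cover identity to the real matrix identity in the first step, and this is exactly where disjointness of the cover is used.
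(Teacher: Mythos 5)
Your proof is correct. The paper does not prove this statement itself---it imports it directly from Jukna's book---and your argument is exactly the standard one behind that citation: disjointness of the cover turns the Boolean union into a real-valued sum of matrices, each rectangle contributes a rank-one (outer-product) matrix, and subadditivity of rank over $\mathbb{R}$ gives $\mathsf{rank}(\mathsf{cm}(F,X_1,X_2)) \leq m$. All three steps are sound as you state them; the only cosmetic remark is that the hypothesis $|X_1|=|X_2|=n$ plays no role in the argument.
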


A typical application of the above statement 
is the \emph{disjointness function},  
\begin{equation}\label{eq:df}
D_n(X_n,Y_n)=(\neg x_1\lor \neg y_1) \land \cdots \land (\neg x_n\lor \neg y_n)\text{,} 
\end{equation}
where $X_n=\{x_1,\ldots,x_n\}$ and $Y_n=\{y_1,\ldots,y_n\}$.  
It is folklore that 
\begin{equation}\label{eq:rkdf}
\mathsf{rank}(\mathsf{cm}(D_n,X_n,Y_n))=2^n\text{,} 
\end{equation}
i.e., the communication matrix of $D_n$ relative to $(X_n,Y_n)$ has full rank \cite[Exercise~7.1]{Jukna}.  
Thus every disjoint rectangle cover of $D_n$ into rectangles with underlying partition $(X_n, Y_n)$ 
has at least $2^n$ rectangles.

\section{Circuit Treewidth}\label{sect:part1}

In this section, we introduce the notion of factor width of a Boolean function, 
relate it with its circuit treewidth, and show that, when parameterized by its factor width, a Boolean function 
admits a linear 
size compilation into some natural classes of canonical deterministic structured NNFs (including SDDs).  

\subsection{Factor Width and Circuit Treewidth}\label{sect:fwbf}

We introduce the notion of \emph{factor width}, 
and recall from the literature the notion of circuit treewidth \cite{DBLP:conf/icdt/JhaS12}.

Let $F(X)=F \colon \{0,1\}^X \to \{0,1\}$ be a Boolean function over a finite set of variables $X$.  For a set of variables $Y$ 
we use the notation $$F(X)=F(Y \cap X,X\setminus Y)\text{}$$
to display a partition of the variables of $F$ into the two blocks $Y \cap X$ and $X\setminus Y$.  It is intended that 
if $b \colon Y \cap X \to \{0,1\}$ and $b' \colon X\setminus Y \to \{0,1\}$, 
then $F(b,b')=F(b \cup b')$.

The \emph{cofactor} (or \emph{subfunction}) of $F$ induced by $b \colon Y \cap X \to \{0,1\}$ is the Boolean function $F'=F'(X\setminus Y)=F' \colon \{0,1\}^{X\setminus Y} \to \{0,1\}$ 
such that
$$F'(b')=F(b,b')\text{,}$$
for all $b' \colon X\setminus Y \to \{0,1\}$.  A function $F''(X\setminus Y)$ is called a \emph{cofactor of $F(X)$ relative to $X\setminus Y$}
if it is equal to the cofactor of $F$ induced by some $b \colon Y \cap X \to \{0,1\}$.

\begin{example}\label{ex:cofactors}
Let $F(x,y)=x \to y$ be Boolean implication, 
i.e.\ $F(b,b')=1$ iff $b \leq b'$, 
for all $b,b'\in \{0,1\}$.  
The cofactors of $F$ relative to $y$, 
induced by Boolean assignments of $x$, 
are $F(0,y) \equiv \top(y)$ and $F(1,y) \equiv y$.  
The cofactors of $F$ relative to $x$, 
induced by Boolean assignments of $y$, 
are $F(x,0) \equiv \neg x$ and $F(x,1) \equiv \top(x)$.  
The cofactors of $F$ induced by Boolean 
assignments of both $x$ and $y$ are $F(1,0) \equiv \bot$ 
and $F(0,0)=F(0,1)=F(1,1) \equiv \top$.  The only cofactor of $F$ 
induced by Boolean assignments of no variables is $F(x,y)$ itself.
\end{example}

We introduce notation to denote the cofactor 
of a Boolean function $F(X)$ generated by replacing, in a partition of $X$, 
the variables in certain blocks by constants. 
Let $\{Y_1,\ldots,Y_l\}$ be a partition of $X$, 
let $L \subseteq [l]$, and let $b_i \colon Y_i \to \{0,1\}$ 
for all $i \in L$.  We write 
$F(B_1,\ldots,B_l)$ 
where $B_i=b_i$ if $i \in L$ and $B_i=Y_i$ otherwise, for all $i \in [l]$, 
to denote the cofactor  of $F$ induced by $\bigcup_{i \in L} b_i$,  
i.e.\ the Boolean function over the variables $\bigcup_{i \in [l]\setminus L} Y_i$ defined, 
for every $\bigcup_{i \in [l]\setminus L} (b_i \colon Y_i \to \{0,1\})$, by
$$\bigcup_{i \in [l]\setminus L} b_i \mapsto F(b_1,\ldots,b_l)\text{.}$$

\begin{example}\label{ex:notation}
Let $F$ be as in Example~\ref{ex:cofactors}, so that $F=F(X)$ for $X=\{x,y\}$.  
Let $Y=\{x\}$.  We write $F=F(X)=F(Y,X\setminus Y)$.  
Let $b \colon Y \to \{0,1\}$.  We write $F(b,X\setminus Y)$ 
to denote the cofactor of $F$ relative to $X\setminus Y$ induced by $b$.  If $b(x)=0$, 
then $F(b,X\setminus Y) \equiv \top(y)$.
\end{example}

Let $Y \subseteq X$.  Intuitively, a factor of 
$F(Y,X\setminus Y)$ is a function $G(Y)$ 
whose models correspond exactly to the assignments of $Y$ 
that induce some fixed cofactor $F'(X\setminus Y)$ of $F$.

\begin{definition}\label{def:factor}
Let $Y$ and $X$ be finite sets of variables 
and let $F(X)=F(Y \cap X,X\setminus Y)$ be a Boolean function.
A Boolean function $G=G(Y \cap X)=G\colon \{0,1\}^{Y \cap X} \to \{0,1\}$ is called a \emph{factor of $F(X)$ relative to $Y$} if 
there exists a cofactor $F'=F'(X \setminus Y)$ of $F$ such that 
$$b \in \mathsf{sat}(G) \Longleftrightarrow F(b,X \setminus Y)=F'\text{.}$$
The factors of $F$ relative to $Y$ are denoted by $\mathsf{factors}(F,Y)$. 
\end{definition}

\begin{example}\label{ex:factors}
Let $F(x,y)=x \to y$ be Boolean implication as in Example~\ref{ex:cofactors}.  
The function $G(x) \equiv x$ is a factor of $F$ relative to $x$, 
because there exists a cofactor of $F$ relative to $y$, 
namely $F'(y) \equiv y$, such that $b \models G(x)$ 
iff $F(b,y)=F'(y)$.  
The function $G(x) \equiv \neg x$ is a factor of $F$ relative to $x$, 
because there exists a cofactor of $F$ relative to $y$, 
namely $F'(y) \equiv \top(y)$, such that $b \models G(x)$ 
iff $F(b,y)=F'(y)$.  
\end{example}

Note that cofactors and factors of a Boolean function relative to a variable set are, in general, 
distinct.\footnote{Exceptions include the parity function.}  

\begin{example}\label{ex:fneqcof}
Let $F(x,y)=x \to y$ be Boolean implication as in Example~\ref{ex:cofactors}.  
Then $G(x) \equiv x$ is a factor of $F$ relative to $x$ (Example~\ref{ex:factors}), 
but it is not a cofactor of $F$ relative to $x$, 
since the only cofactors of $F$ relative to $x$ 
are equivalent to $\neg x$ and $\top(x)$ (Example~\ref{ex:cofactors}).
 \end{example}

Note that, by Definition~\ref{def:factor}, 
\begin{equation}\label{eq:dummyvars}
\mathsf{factors}(F,Y)=\mathsf{factors}(F,Y\cap X)\text{,} 
\end{equation}
but we insist on $Y$ being an arbitrary set of variables for technical convenience.  
Moreover, again by Definition~\ref{def:factor}, 
\begin{equation}\label{eq:part}
\{0,1\}^{Y \cap X} = \bigcup_{G \in \mathsf{factors}(F,Y)} \mathsf{sat}(G) 
\end{equation}
and the union is disjoint.  In words, $\mathsf{factors}(F,Y)$ naturally determines 
a partition of $\{0,1\}^{Y \cap X}$ whose blocks, of the form 
$\mathsf{sat}(G)$ for $G \in \mathsf{factors}(F,Y)$, 
correspond to the cofactors of $F$ relative to $X \setminus Y$.

Finally, we introduce the notion of factor width of a Boolean function.  

\begin{definition}
Let $F=F(X)$ be a Boolean function and let $T$ be a vtree for $Z \supseteq X$.  
The \emph{factor width of $F$ relative to $T$}, in symbols $\fw(F,T)$, 
is defined by  
$$\fw(F,T)= \max_{v \in T}|\mathsf{factors}(F,Z_v)|\text{.}$$
The \emph{factor width} of $F$ is defined by 
$$\fw(F)= \min\{ \fw(F,T) \colon \textup{$T$ vtree for $Z \supseteq X$} \}\text{.}$$ 
\end{definition}

The \emph{treewidth of a circuit} $C$, 
in symbols $\tw(C)$, is the treewidth  
of 
the undirected graph underlying (the directed acyclic graph underlying) $C$.  
The \emph{circuit treewidth} $\ctw(F)$ of a Boolean function $F$ 
is the minimum treewidth of a circuit computing $F$.

A crucial fact in our development is that the factor width of a Boolean function is bounded above by a function 
of its circuit treewidth. The proof is a revisitation of \cite[Lemma~2.12]{DBLP:conf/icdt/JhaS12}.

\begin{lemma}\label{lemma:jhasuciu}
For all Boolean funct\-ions $F$,  
$$\fw(F) \leq 2^{(\ctw(F)+2)2^{\ctw(F)+1}} \text{.}$$
\end{lemma}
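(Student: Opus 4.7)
The strategy is to adapt the argument of \cite[Lemma~2.12]{DBLP:conf/icdt/JhaS12} from variable orderings to vtrees. Set $k = \ctw(F)$ and fix a circuit $C$ computing $F$ with $\tw(C)=k$, together with a nice tree decomposition $(T_0,(B_t)_{t\in T_0})$ of $C$ of width $k$; every bag has at most $k+1$ gates, and every edge of $T_0$ induces a separator of at most $k+1$ gates.

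First, I would read a vtree $T$ off $(T_0,(B_t))$ on a variable set $Z\supseteq X$, possibly padding with dummy variables so that the tree decomposition and the vtree align neatly. The construction is arranged so that for every $v\in T$, the variables $Z_v$ below $v$ are exactly the input variables of $C$ lying on one side of a specific edge cut of $T_0$, whose separator $S_v$ consists of at most $k+1$ gates of $C$.

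The core of the proof is to bound $|\mathsf{factors}(F,Z_v)|$ for every such $v$. Using (\ref{eq:dummyvars}), I may set $V_1 = Z_v\cap X$ and $V_2 = X\setminus V_1$, so by Definition~\ref{def:factor} the factors in $\mathsf{factors}(F,Z_v)$ biject with the distinct cofactors of $F$ relative to $V_2$ induced by assignments of $V_1$. By the separator property of $S_v$, all dependence of these cofactors on $V_1$ is channeled through the at most $k+1$ gates of $S_v$. The structural claim then is that each such cofactor is encoded by a tuple of at most $k+2$ Boolean functions, each of the form $\{0,1\}^{k+1}\to\{0,1\}$: intuitively, one function per separator gate, recording how its value depends on the remaining separator gates once the $V_1$-assignment is fixed, plus one further function describing the behavior of the output gate through the separator. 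Because there are $2^{2^{k+1}}$ Boolean functions on $k+1$ variables, the number of such tuples is bounded by $\bigl(2^{2^{k+1}}\bigr)^{k+2} = 2^{(k+2)2^{k+1}}$, which yields $|\mathsf{factors}(F,Z_v)|\leq 2^{(k+2)2^{k+1}}$. Taking the maximum over $v\in T$ gives $\fw(F,T)\leq 2^{(k+2)2^{k+1}}$, hence $\fw(F)\leq 2^{(\ctw(F)+2)2^{\ctw(F)+1}}$.

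The main obstacle is formalizing the $O(k)$-tuple encoding of the cofactor. In a DAG, values can propagate back and forth across the separator $S_v$ along directed paths of arbitrary length, so one must carefully pin down which at most $k+2$ Boolean functions on the $\leq 2^{k+1}$ possible separator states are genuinely enough to reconstruct the cofactor $F(b,V_2)$ from the $V_1$-assignment $b$. This is the step where the combinatorial content of the Jha--Suciu argument enters, and also where the (presumably loose) factor of $k+2$ in the exponent originates.
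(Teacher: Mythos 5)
Your overall architecture coincides with the paper's: take a nice tree decomposition of a width-$\ctw(F)$ circuit, turn it into a vtree over $X$ plus dummy variables by hanging each variable off the node that forgets it, and then bound the number of factors at each vtree node $v$ by counting Boolean functions over the $\leq k+1$ gates of the corresponding bag, which separates $X_v$ from $X\setminus X_v$ in the circuit graph. The arithmetic $\bigl(2^{2^{k+1}}\bigr)^{k+2}$ also matches the paper's $2^{2^{k}}\cdot\bigl(2^{2^{k}}\bigr)^{k}$ up to the index shift. However, the step you flag as "the main obstacle" is not a formality to be filled in later; it is the entire content of the lemma, and the flat encoding you sketch is not correct as stated. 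A separator gate $h\in B$ is \emph{not} in general a function of the remaining separator gates once the $V_1$-assignment is fixed: the subcircuit below $h$ may also read variables of $V_2=X\setminus X_v$, in which case fixing $b\colon V_1\to\{0,1\}$ does not collapse $h$ to a function $\{0,1\}^{k+1}\to\{0,1\}$. Moreover, bag gates can feed into one another along paths that leave and re-enter $B$, so "one function per separator gate" risks circularity unless you order the bag gates by containment of their subcircuits.

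The paper resolves both issues with a recursive claim and an induction. For each gate $g$ it defines $K(g)\subseteq B$ as the bag gates reaching $g$ by a directed path whose interior avoids $B$, replaces those gates by fresh variables $Z$ to form $C'_g$, and uses the separator property to establish a dichotomy: $\var(C'_g)\subseteq X_v\cup Z$ or $\var(C'_g)\subseteq (X\setminus X_v)\cup Z$. In the first case each $b$ yields one of at most $2^{2^{|K(g)|}}$ functions of $Z$; in the second case all $b$ yield the same function, so the count is $1$. This gives the recursion $|\mathsf{factors}(C_g,X_v)|\leq 2^{2^{|K(g)|}}\prod_{h\in K(g)}|\mathsf{factors}(C_h,X_v)|$, and an induction on $|L(g)|=|(C_g\setminus\{g\})\cap B|$ (which strictly decreases from $g$ to each $h\in K(g)$ and breaks the circularity you worry about) yields $|\mathsf{factors}(C_g,X_v)|\leq 2^{2^{|L(g)|}}$, hence the stated bound at the output gate. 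Your proposal would need to supply exactly this dichotomy and this induction (or an equivalent topological ordering of the bag gates) before the tuple-counting step is justified; as written, the proof is a correct plan with its central step missing.
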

\begin{proof} 
Let $C$ be a treewidth $k-1$ circuit computing the Boolean function $F(X)$.  
Let $S$ be 
a \emph{nice} tree decomposition of the gates of $C$, 
witnessing treewidth $k-1$; without loss of generality, the root of $S$ is the empty bag, 
therefore each input gate of $C$ (i.e., each variable in $X$) is forgotten exactly once in $S$.  

We associate to $S$ a vtree $T$ for $X$ as follows.  Let $W$ be a set of fresh variables 
in a bijective correspondence with the leaves of $S$. 
Label the leaves of $S$ by pairwise distinct (dummy) variables in $W$.  For every variable 
$x \in X$, append a fresh leaf labelled $x$ to the node forgetting $x$ in $S$. 
The resulting tree $T$ is a vtree for $X \cup W \supseteq X$.  

For every $v \in T$, let $X_v$ denote the variables in $X$ appearing in $T_v$ 
(or equivalently, the variables in $T_v$ that are not dummy variables).  
In light of (\ref{eq:dummyvars}), to show that $$\fw(F,T) \leq 2^{(k+1)2^{k}}\text{}$$
it is sufficient to prove that $|\mathsf{factors}(F,X_v)|$ matches the bound for all $v \in T$.  

If $v$ is a leaf in $T$, then $X_v=\{x\}$ for some $x \in X$ 
and $|\mathsf{factors}(F,X_v)| \leq 2$, 
or $v$ is labelled by some dummy variable in $W$, 
so that $X_v=\emptyset$ and $|\mathsf{factors}(F,X_v)|=1$.  
Otherwise $v$ is a bag $B$ 
in $S$ and in this case $X_v$ contains the variables in $X$ 
forgotten by nodes in the subtree of $S$ rooted at $v$.  

For a gate $g$ in $C$, let $K(g) \subseteq B$ be the gates 
in $B$ with a directed path to $g$ in (the DAG underlying) $C$ whose intermediate gates are not in $B$.
Namely, $h \in K(g)$ iff $h \in B$ and there exists a directed path 
$$h \stackrel{C}{\to} h_1 \stackrel{C}{\to} \cdots \stackrel{C}{\to} h_m \stackrel{C}{\to} g$$
in the DAG underlying $C$ such that $\{h_1,\ldots,h_m\} \cap B=\emptyset$.

Let $g$ be a gate in $C$.  We freely identify 
$C_g$, the subcircuit of $C$ rooted at $g$, 
with the Boolean function on $X$ it computes.  

\begin{claim}\label{cl:main-cl}
For every $g \in C$, it holds that 
\begin{equation}\label{eq:rec}
|\mathsf{factors}(C_g,X_v)| \leq 2^{2^{|K(g)|}} \prod_{h \in K(g)}|\mathsf{factors}(C_h,X_v)| \text{.}
\end{equation} 
\end{claim}
\begin{proof}[Proof of Claim~\ref{cl:main-cl}]
Let $Z=\{z_h \colon h\in K(g)\}$ be a set of fresh variables.  
Let $C_g'$ denote the circuit obtained from $C_g$ 
by transforming each gate $h \in K(g)$ into an input gate labelled by the variable $z_h \in Z$.  
We distinguish two cases depending on whether or not $g$ is in $B$.

If $g \not\in B$, then observe that $\var(C'_g) \subseteq X_v \cup Z$, 
or $\var(C'_g) \subseteq (X\setminus X_v ) \cup Z$; otherwise, if $C'_g$ 
uses variables in both $X_v$ and $X\setminus X_v$ then, in the graph underlying $C$, 
there exists a path from $X_v$ to $X\setminus X_v$ not intersecting $B$, 
contradicting the properties of 
$S$.

If $\var(C'_g) \subseteq X_v \cup Z$ then, for every $b \colon X_v \to \{0,1\}$, 
$C'_g(b)$ is a function of $Z$ among at most $$2^{2^{|Z|}}=2^{2^{|K(g)|}}$$ possibilities.  
Otherwise, if $\var(C'_g) \subseteq (X\setminus X_v ) \cup Z$, 
then every $b \colon X_v \to \{0,1\}$ yields the same function 
$C'_g(b)=C'_g$ of $(X\setminus X_v) \cup Z$.  Since each variable $z_h \in Z$ 
in $C'_g$ represents the subcircuit $C_h$ of $C$ for $h \in K(g)$, it follows that  
$$|\mathsf{factors}(C_g,X_v)| \leq 2^{2^{|K(g)|}} \prod_{h \in K(g)}|\mathsf{factors}(C_h,X_v)| \text{.}$$

If $g \in B$, then each proper subcircuit of $C'_g$ 
uses only variables in $X_v \cup Z$ or only variables in $(X\setminus X_v) \cup Z$; 
otherwise we obtain a contradiction as above.  Let $G_1,\ldots,G_m$ 
and $H_1,\ldots,H_l$ be a bipartition of the immediate subcircuits of $C'_g$ 
such that the $G_i$'s only use variables in $X_v \cup Z$ 
and the $H_i$'s only use variables in $(X\setminus X_v) \cup Z$.  

If $g$ is an $\wedge$-gate, then $$C'_g \equiv (G_1 \wedge \cdots \wedge G_m) \wedge (H_1 \wedge \cdots \wedge H_l)\text{.}$$
For every $b \colon X_v \to \{0,1\}$, 
$(G_1 \wedge \cdots \wedge G_m)(b)$ is a function $G_b$ 
over $Z$; 
on the other hand, every $b \colon X_v \to \{0,1\}$ yields the same function 
$H$ over $(X\setminus X_v) \cup Z$, namely $H_1 \wedge \cdots \wedge H_l$.  Therefore, 
for all $b \colon X_v \to \{0,1\}$, $C'_g(b)$ is a function over $(X\setminus X_v) \cup Z$ of the form 
$$G_b \wedge H\text{,}$$ 
where $G_b$ is a function over $Z$, so that 
$$|\{C'_g(b) \colon b \in \{0,1\}^{X_v} \}| \leq 2^{2^{|Z|}}\text{.}$$
Recalling that each variable $z_h \in Z$ in $C'_g$ 
represents the subcircuit $C_h$ of $C$ for $h \in K(g)$, we have
$$|\mathsf{factors}(C_g,X_v)| \leq 2^{2^{|K(g)|}} \prod_{h \in K(g)}|\mathsf{factors}(C_h,X_v)| \text{.}$$

The cases where $g$ is a $\vee$-gate or a $\neg$-gate are similar, and the claim is proved.
\end{proof}

For all $g \in C$, let $L(g)=(C_g\setminus \{g\}) \cap B$.  By induction on $|L(g)| \geq 0$ we prove that for all $g \in B$ 
it holds that 
\begin{equation}\label{eq:ind}
|\mathsf{factors}(C_g,X_v)| \leq 2^{2^{|L(g)|}}\text{.} 
\end{equation}

For the base case, let $g \in B$ be such that $|L(g)|=0$.  Then 
$$|\mathsf{factors}(C_g,X_v)| \leq 2^{2^0} \cdot 1=2^{2^{|L(g)|}}\text{,}$$
where the first inequality holds by (\ref{eq:rec}) as $K(g)=\emptyset$ in this case.

For the inductive case, let $g \in B$ be such that $|L(g)| \geq 1$.  Then, resp.\ by (\ref{eq:rec}) 
and the induction hypothesis as $|L(h)|<|L(g)|$ for all $h \in K(g)$,
\begin{align*}
|\mathsf{factors}(C_g,X_v)| &\leq 2^{2^{|K(g)|}} \prod_{h \in K(g)}|\mathsf{factors}(C_h,X_v)| \\
&\leq 2^{2^{|K(g)|}} \prod_{h \in K(g)} 2^{2^{|L(h)|}}\text{.}
\end{align*}
But  
$$2^{2^{|K(g)|}} \prod_{h \in K(g)} 2^{2^{|L(h)|}} \leq 2^{2^{|K(g)|+ \sum_{h \in K(g)} |L(h)|}} \leq 2^{2^{|L(g)|}}$$
where $|K(g)|+ \sum_{h \in K(g)} |L(h)| \leq |L(g)|$ justifies the last inequality.

We now conclude the proof.  Let $g$ be the output gate of $C$, i.e., $C=C_g$.  Then, 
justifying the first and second inequalities resp.\ by (\ref{eq:rec}) and by (\ref{eq:ind}), 
$K(g) \subseteq B$ and $|B|\leq k$, 
\begin{align*}
|\mathsf{factors}(F,X_v)| & = |\mathsf{factors}(C_g,X_v)|\\
& \leq 2^{2^{|K(g)|}} \prod_{h \in K(g)}|\mathsf{factors}(C_h,X_v)|\\
&\leq 2^{2^k} \left( 2^{2^{k}} \right)^k\\
&= 2^{(k+1)2^k}\text{,} & 
\end{align*}
and we are done.
\end{proof}

We conclude the section observing that circuit tree\-wi\-dth is computable, 
thus answering a question posed by Jha and Suciu \cite[Section~6]{DBLP:conf/icdt/JhaS12}. 

\begin{proposition}\label{prop:seese}
The circuit treewidth of a Boolean function is computable.
\end{proposition}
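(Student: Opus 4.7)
The plan is to reduce the problem to satisfiability of monadic second-order logic (MSO) on graphs of bounded treewidth, which is decidable by Seese's theorem \cite{DBLP:journals/apal/Seese91}. Fix the input Boolean function $F\colon \{0,1\}^X \to \{0,1\}$ with $X = \{x_1,\ldots,x_n\}$ (given, say, as a truth table). I would encode circuits over $X$ as finite vertex-labeled directed graphs over the signature $\{E, L_\wedge, L_\vee, L_\neg, L_\top, L_\bot, L_{x_1}, \ldots, L_{x_n}, L_{\mathrm{out}}\}$, where $E$ is the wire relation and the unary predicates tag each vertex with its gate type, with $L_{\mathrm{out}}$ marking the designated output gate.

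First I would write a sentence $\Psi$ asserting well-formedness: the graph is a DAG (expressible in MSO by the standard ``every nonempty set of vertices contains a source'' formulation); each vertex carries exactly one gate-type label; input-labeled vertices have no incoming edges; negation-labeled vertices have fanin exactly one; exactly one vertex carries $L_{\mathrm{out}}$; and at most one vertex per input label, matching the pairwise-distinctness condition of the paper. Next, for every assignment $a \in \{0,1\}^X$, I would write an MSO sentence $\Phi_a$ asserting that \emph{the encoded circuit evaluates to $F(a)$ on input $a$}, by existentially quantifying a set $S_a$ of vertices (intended to be the gates outputting $1$ under $a$) and imposing the standard first-order consistency constraints: an $L_{x_i}$-vertex lies in $S_a$ iff $a(x_i)=1$; $L_\top$-vertices lie in $S_a$ and $L_\bot$-vertices do not; an $\wedge$-vertex lies in $S_a$ iff all its $E$-predecessors do; an $\vee$-vertex lies in $S_a$ iff some $E$-predecessor does; a $\neg$-vertex lies in $S_a$ iff its unique predecessor does not; and the $L_{\mathrm{out}}$-vertex lies in $S_a$ iff $F(a)=1$. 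The conjunction $\Phi_F := \Psi \wedge \bigwedge_{a \in \{0,1\}^X} \Phi_a$ is then an MSO sentence, computable effectively from $F$, whose models are exactly the labeled graphs encoding circuits that compute $F$.

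By Seese's theorem, for every fixed $k \geq 0$ the MSO theory of finite labeled graphs of treewidth at most $k$ is decidable; in particular one can decide whether $\Phi_F$ admits a model of treewidth at most $k$, which by construction is equivalent to $\ctw(F) \leq k$. The algorithm iterates $k = 0, 1, 2, \ldots$ and returns the first value that succeeds. Termination is guaranteed because $F$ has at least one circuit implementation (e.g., its canonical DNF), so $\ctw(F)$ is finite.

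The main obstacle is not the high-level recipe but verifying that the translation of semantic correctness --- for \emph{every} input assignment, the circuit outputs the right value --- really is captured by a single MSO sentence over arbitrary, unboundedly large graphs. The key observation making this work is that the evaluation of one fixed input $a$ is MSO-definable by existentially quantifying the evaluation set $S_a$ together with purely first-order local consistency constraints, so the conjunction over the finitely many (exponentially many in $n$) assignments is still a single MSO sentence. A minor technical caveat is that Seese's theorem is classically stated for uncolored graphs, but its extension to vertex-labeled structures over a fixed finite alphabet is routine, either by absorbing the labels as unary predicates in the signature or by simulating them with local gadgets that preserve treewidth up to an additive constant.
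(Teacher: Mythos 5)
Your proposal is correct and follows essentially the same route as the paper: both encode circuits computing $F$ as graph structures, express ``this graph implements a circuit computing $F$'' as an effectively constructible MSO sentence (with the per-assignment evaluation handled by an existentially quantified set of true gates), and then invoke Seese's decidability of MSO satisfiability on graphs of bounded treewidth while iterating over candidate widths. The only difference is cosmetic --- you use unary predicates for gate types where the paper uses loop/star gadgets on plain graphs so that the encoding's treewidth exactly equals the circuit's treewidth; your primary (unary-predicate) encoding also preserves treewidth exactly, so the argument goes through, though your fallback remark about gadgets preserving treewidth only ``up to an additive constant'' would, if actually relied upon, compute the wrong exact value of $\ctw(F)$.
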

\begin{proof}
Let $X=\{x_1,\ldots,x_n\}$ be a variable set.  Say that a graph (with loops) $G$ \emph{implements} a circuit over $X$ 
iff  there exists a circuit $C$ over $X$ such that $G$ is (isomorphic to) 
the graph whose edge set $E$ is defined relative to $C$ as follows: 
\begin{itemize}
\item For every wire $g \to g'$ in $C$, add to $E$ 
a path $\{g,h\}$, $\{h,h'\}$, $\{h',g'\}$ from $g$ to $g'$ with a loop on $h'$, 
where $h$ and $h'$ are fresh vertices.
\item Add to $E$ a loop on the output gate $g$ of $C$.
\item Let 
\begin{align*}
(\circ,j) &\in \{(x_1,1),\ldots,(x_n,n),(\bot,n+1),(\top,n+2)\}\\ 
           &\ \ \ \cup \{ (\neg,n+3),(\wedge,n+4),(\vee,n+5) \}\text{.} 
\end{align*} 
For every $\circ$-gate $g$ in $C$, add to $E$ a $j$-star $\{g,h_1\}$, $\ldots$, $\{g,h_j\}$ centered at $g$, 
where $h_1,\ldots,h_j$ are fresh vertices.
\end{itemize}

Intuitively a graph (with loops) $G$ implementing a circuit $C$ 
is a faithful representation of $C$ in the vocabulary of graphs, 
where the arcs and labels used to represent $C$ in the vocabulary of circuits are suitably expressed by edges (and loops).   
Let $F$ be a Boolean function over $X$.  It is a tedious but straightforward exercise to write an MSO sentence $\phi_F$ 
that is true on a graph (with loops) $G$ iff  $G$ implements a circuit $C$ on $X$ computing $F(X)$ \cite[Examples 4.10, 4.13, and 4.18]{FlumGroheBook}.\footnote{Recall that a 
(finite) digraph is acyclic iff  every induced subgraph has a source and a sink.}

Also note that, if the graph (with loops) $G$ implements a circuit $C$, 
then the treewidth of $G$ is equal to the treewidth of 
$C$.  
Now, let $k$ be an upper bound on the circuit treewidth of $F$ 
(for instance, the treewidth of the DNF whose terms are exactly the models of $F$).  
Seese proves that, given an MSO sentence on the vocabulary of graphs, 
it is decidable whether it is satisfied by a graph (with loops) of treewidth $k$ \cite{DBLP:journals/apal/Seese91}.  
We therefore cycle for $i=1,2,\ldots$ until we find $i \leq k$ such that $\phi_F$ 
is modeled by a graph of treewidth $i$.
\end{proof}

\subsection{Few Factors Imply Small Disjoint Rectangle Covers}\label{sect:impl-sdnnf}

We show that Boolean functions of small factor width 
have implementations of small width within natural canonical subclasses of 
deterministic structured NNFs (including canonical SDDs); conversely, small width implementations 
within such circuit classes imply small circuit treewidth.  Thus a class of Boolean functions 
has bounded circuit treewidth iff  it has bounded width implementations in some natural, canonical classes of 
deterministic structured NNFs, including SDDs.

\subsubsection{Factorized Implicant Width and Deterministic Structured Forms}\label{sect:fimpldec}

We introduce the notion of \emph{factorized implicant} of a Boolean function $F(Y,Y')$, 
roughly
$$F'(Y) \wedge F''(Y') \models F(Y,Y')\text{,}$$ 
where $Y$ and $Y'$ are disjoint sets of variables, 
$F'$ (resp., $F''$) is a factor of $F$ relative to $Y$ (resp., $Y'$).  

The first key insight is that the rectangle formed by multiplying any two factors $F'(Y)$ and $F''(Y')$ of $F(Y,Y')$ 
is either disjoint from $F$ or contained in $F$.  

\begin{lemma}\label{lemma:factors}
Let $F=F(X)$ be a Boolean function.  Let $Y$ and $Y'$ be disjoint subsets of $X$.  
Let $H$, $G$, and $G'$ be factors of $F$ relative to $Y \cup Y'$, $Y$, and $Y'$, respectively.  Then either 
\begin{equation}\label{eq:rect-cont}
(\mathsf{sat}(G) \times \mathsf{sat}(G')) \subseteq  \mathsf{sat}(H)
\end{equation}
or 
\begin{equation}\label{eq:rect-disj}
(\mathsf{sat}(G) \times \mathsf{sat}(G')) \cap  \mathsf{sat}(H) = \emptyset\text{.}
\end{equation}
\end{lemma}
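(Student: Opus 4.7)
The plan is to reduce the dichotomy to a statement about which cofactor of $F$ relative to $X\setminus (Y\cup Y')$ is induced by an assignment in the rectangle $\mathsf{sat}(G)\times \mathsf{sat}(G')$. Specifically, I would show that this cofactor is the \emph{same} for every point of the rectangle, so that whether the point lies in $\mathsf{sat}(H)$ is determined uniformly by the rectangle, yielding either (\ref{eq:rect-cont}) or (\ref{eq:rect-disj}).

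First I would unfold Definition~\ref{def:factor} to obtain three cofactors: $F_G=F_G(X\setminus Y)$ witnessing $G$, $F_{G'}=F_{G'}(X\setminus Y')$ witnessing $G'$, and $F_H=F_H(X\setminus(Y\cup Y'))$ witnessing $H$. The defining property of $G$ reads: for every $b\in\mathsf{sat}(G)$ and every $c\colon X\setminus Y\to\{0,1\}$, $F(b,c)=F_G(c)$; an analogous statement holds for $G'$.

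The key step, which I view as the crux, is the following invariance claim: for every $(b_1,b_1'),(b_2,b_2')\in \mathsf{sat}(G)\times \mathsf{sat}(G')$ the cofactors
\[
F(b_1,b_1',X\setminus(Y\cup Y'))\quad\text{and}\quad F(b_2,b_2',X\setminus(Y\cup Y'))
\]
are equal as functions of $X\setminus(Y\cup Y')$. To see this, I would perform two successive swaps. Using that $b_1,b_2\in\mathsf{sat}(G)$, substitute $c=(b_1',X\setminus(Y\cup Y'))$ into the identity $F(b_1,c)=F_G(c)=F(b_2,c)$ to get
\[
F(b_1,b_1',X\setminus(Y\cup Y'))\;=\;F(b_2,b_1',X\setminus(Y\cup Y')).
\]
Then, using that $b_1',b_2'\in\mathsf{sat}(G')$, substitute $c'=(b_2,X\setminus(Y\cup Y'))$ into $F(c',b_1')=F_{G'}(c')=F(c',b_2')$ to get
\[
F(b_2,b_1',X\setminus(Y\cup Y'))\;=\;F(b_2,b_2',X\setminus(Y\cup Y')).
\]
Composing the two equalities yields the invariance claim. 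Call the common cofactor $F^\star$.

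Finally, an assignment $(b,b')\in \mathsf{sat}(G)\times \mathsf{sat}(G')$ lies in $\mathsf{sat}(H)$ iff the cofactor it induces, namely $F^\star$, equals $F_H$. This condition depends only on $F^\star$ and $F_H$, not on the chosen point, so either all points of $\mathsf{sat}(G)\times \mathsf{sat}(G')$ belong to $\mathsf{sat}(H)$ (case $F^\star=F_H$, giving (\ref{eq:rect-cont})) or none does (case $F^\star\neq F_H$, giving (\ref{eq:rect-disj})). The only real obstacle is bookkeeping: making sure the two substitutions legally use the defining properties of $G$ and $G'$, which is why I split the swap into two steps rather than invoking both factorisations at once.
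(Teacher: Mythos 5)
Your proposal is correct and follows essentially the same route as the paper: the two-step swap $F(b_1,b_1',X')=F(b_2,b_1',X')=F(b_2,b_2',X')$ is exactly the chain of equalities in the paper's argument, which merely packages it as a proof by contradiction from a nonempty intersection rather than as a direct invariance claim. Your direct phrasing, concluding that membership in $\mathsf{sat}(H)$ depends only on the common induced cofactor, is a clean and equivalent way to organize the same idea.
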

\begin{proof}
If $(\mathsf{sat}(G) \times \mathsf{sat}(G')) \cap \mathsf{sat}(H) \neq \emptyset$, 
then let $b \in \mathsf{sat}(G)$ and $b' \in \mathsf{sat}(G')$ be such that 
$b \cup b' \in \mathsf{sat}(H)$.  Let $c \in \mathsf{sat}(G)$ and $c' \in \mathsf{sat}(G')$.  
It suffices to show that $c \cup c' \in \mathsf{sat}(H)$.  Below, $X'=X\setminus (Y \cup Y')$.

Assume for a contradiction that $c \cup c' \not\in \mathsf{sat}(H)$.  
Therefore, by definition, the cofactors of $F$ induced by $b \cup b'$ and $c \cup c'$ are distinct, 
i.e., $$F(b,b',X') \neq F(c,c',X')\text{.}$$  

On the other hand, by definition, $\{b,c\} \subseteq \mathsf{sat}(G)$ implies that $F(b,Y',X')=F(c,Y',X')$, 
and similarly $\{b',c'\} \subseteq \mathsf{sat}(G')$ implies that $F(Y,b',X')=F(Y,c',X')$.  In particular, 
$F(b,b',X')=F(c,b',X')$ and $F(c,b',X')=F(c,c',X')$.  Then $$F(b,b',X')=F(c,c',X')\text{,}$$ a contradiction.
\end{proof}

Intuitively, a factorized implicant of a function $F(Y,Y')$ is a pair of factors $F'(Y)$ and $F''(Y')$ 
entirely contained in $F$, as in (\ref{eq:rect-cont}).  Formally,  

\begin{definition} 
Let $F=F(X)$ be a Boolean function.  
Let $H$, $G$, and $G'$ be factors of $F$ relative to $Y \cup Y'$, $Y$, and $Y'$, resp., 
where $Y$ and $Y'$ are disjoint subsets of $X$.  Then $(G,G')$ is 
a \emph{factorized implicant of $H$ relative to $(F,Y,Y')$} 
if $G$ and $G'$ satisfy (\ref{eq:rect-cont}) relative to $H$.  
We denote by $\mathsf{impl}(F,H,Y,Y')$ the set of factorized implicants of $H$ relative to $(F,Y,Y')$. 
\end{definition}

The second key insight is that 
the factorized implicants of $F(Y,Y')$ naturally induce a disjoint rectangle cover for $F(Y,Y')$.

\begin{lemma}\label{lemma:drc}
Let $F=F(X)$ be a Boolean function and let $H$ be a factor of $F$ relative to $Y \cup Y'$, 
where $Y$ and $Y'$ are disjoint subsets of $X$.  Then 
$$\{ \mathsf{sat}(G) \times \mathsf{sat}(G') \colon (G,G') \in \mathsf{impl}(F,H,Y,Y') \}$$
forms a disjoint rectangle cover of $H$, i.e., 
\begin{equation}\label{eq:main-node-1}
\mathsf{sat}(H)=\bigcup_{(G,G') \in \mathsf{impl}(F,H,Y,Y')} (\mathsf{sat}(G) \times \mathsf{sat}(G'))\text{,}
\end{equation}
and the union is disjoint.
\end{lemma}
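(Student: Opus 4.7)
The plan is to split the lemma into two independent claims—coverage and disjointness—and derive each from the two tools already in place: the dichotomy of Lemma~\ref{lemma:factors} and the partition property (\ref{eq:part}) applied to $\mathsf{factors}(F,Y)$ and to $\mathsf{factors}(F,Y')$. Containment of each rectangle in $\mathsf{sat}(H)$ needs no work at all, since condition (\ref{eq:rect-cont}) in the definition of factorized implicant is literally $\mathsf{sat}(G) \times \mathsf{sat}(G') \subseteq \mathsf{sat}(H)$.

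For the coverage direction, I would fix an arbitrary $b \cup b' \in \mathsf{sat}(H)$ with $b \colon Y \to \{0,1\}$ and $b' \colon Y' \to \{0,1\}$. Applying (\ref{eq:part}) twice yields the unique $G \in \mathsf{factors}(F,Y)$ with $b \in \mathsf{sat}(G)$ and the unique $G' \in \mathsf{factors}(F,Y')$ with $b' \in \mathsf{sat}(G')$. Then $b \cup b'$ witnesses that $(\mathsf{sat}(G) \times \mathsf{sat}(G')) \cap \mathsf{sat}(H) \neq \emptyset$, so the dichotomy of Lemma~\ref{lemma:factors} rules out (\ref{eq:rect-disj}) and forces (\ref{eq:rect-cont}); equivalently $(G,G')$ is a factorized implicant of $H$ whose rectangle contains $b \cup b'$. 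This shows every model of $H$ lies in some rectangle of the claimed cover.

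For disjointness, I would take two distinct factorized implicants $(G_1,G_1') \neq (G_2,G_2')$, so without loss of generality $G_1 \neq G_2$ (the case $G_1' \neq G_2'$ being symmetric). The disjointness clause of (\ref{eq:part}) applied to $\mathsf{factors}(F,Y)$ gives $\mathsf{sat}(G_1) \cap \mathsf{sat}(G_2) = \emptyset$, which immediately forces
\[
(\mathsf{sat}(G_1) \times \mathsf{sat}(G_1')) \cap (\mathsf{sat}(G_2) \times \mathsf{sat}(G_2')) = \emptyset\text{.}
\]

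There is no real obstacle here: all semantic content has been absorbed into Lemma~\ref{lemma:factors} (which supplies the dichotomy) and into the definitional observation (\ref{eq:part}) (which supplies the partition), so the present statement is a short bookkeeping consequence of the two. The only point worth stating explicitly in the write-up is that factorized implicants are indexed by \emph{pairs} of factors—one on $Y$ and one on $Y'$—so it is the partition of $\{0,1\}^Y$ and the partition of $\{0,1\}^{Y'}$ (not the one of $\{0,1\}^{Y \cup Y'}$) that deliver both existence and uniqueness of the rectangle covering a given model of $H$.
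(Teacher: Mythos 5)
Your proposal is correct and follows essentially the same route as the paper's proof: coverage via the partition property (\ref{eq:part}) applied separately to $\mathsf{factors}(F,Y)$ and $\mathsf{factors}(F,Y')$ combined with the dichotomy of Lemma~\ref{lemma:factors}, and disjointness via the disjointness clause of (\ref{eq:part}). No gaps; nothing further needed.
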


A circuit interpretation of Lemma~\ref{lemma:drc} is that 
\begin{equation}\label{eq:syntactic-fif}
C_H \equiv \bigvee_{(G,G') \in \mathsf{impl}(F,H,Y,Y')}(C_G \wedge C_{G'})
\end{equation}
where $C_H$, $C_G$, and $C_{G'}$ are circuits using variables in $Y \cup Y'$, $Y$, and $Y'$ 
computing $H$, $G$, and $G'$, respectively. The $\wedge$-gates are decomposable as $Y$ and $Y'$ are disjoint, 
and by Lemma~\ref{lemma:drc} the $\vee$-gate is deterministic.

\begin{proof}[Proof of Lemma~\ref{lemma:drc}]
We claim that the union on the right of (\ref{eq:main-node-1}) is disjoint.  
Indeed if $(G_1,G'_1)$ and $(G_2,G'_2)$ are distinct implicants in $\mathsf{impl}(F,H,Y,Y')$, then $G_1\neq G_2$ or $G'_1\neq G'_2$.  If $G_1\neq G_2$, 
then $\mathsf{sat}(G_1) \cap \mathsf{sat}(G_2)=\emptyset$ 
because distinct factors of $F$ relative to $Y$ have disjoint models by (\ref{eq:part}).  
Similarly, if $G'_1\neq G'_2$, then $\mathsf{sat}(G'_1) \cap \mathsf{sat}(G'_2)=\emptyset$.  
Therefore, $(\mathsf{sat}(G_1) \times \mathsf{sat}(G'_1)) \cap (\mathsf{sat}(G_2) \times \mathsf{sat}(G'_2)) =\emptyset$, 
and we are done.
      
We prove the equality in (\ref{eq:main-node-1}).  For the nontrivial inclusion ($\subseteq$), 
let $b \colon Y \cup Y' \to \{0,1\}$ be in $\mathsf{sat}(H)$.  
By (\ref{eq:part}), there exist $G \in \mathsf{factors}(F,Y)$ 
such that $b|_Y \in \mathsf{sat}(G)$ and $G' \in \mathsf{factors}(F,Y')$ 
such that $b|_{Y'} \in \mathsf{sat}(G')$.  Then $b|_Y \cup b|_{Y'}=b \in (\mathsf{sat}(G) \times \mathsf{sat}(G')) \cap \mathsf{sat}(H)$.  
It follows by Lemma~\ref{lemma:factors} that $\mathsf{sat}(G) \times \mathsf{sat}(G') \subseteq \mathsf{sat}(H)$.  
Then $(G,G') \in \mathsf{impl}(F,H,Y,Y')$ and $b$ is contained in the union on the right.
\end{proof}

The above insight can be exploited recursively to implement Boolean functions 
within a natural, canonical class of deterministic structured forms.

Let $F$ be a function and let $T$ be a vtree, 
both over the variables $X$.  For every node $v \in T$ 
and every factor $H$ of $F$ relative to $X_v$, 
we construct a  
circuit $C_{v,H}$ as follows. 

If $v$ is a leaf of $T$, then $X_v=\{x\}$ for some variable $x \in X$.  
There are two cases.  Either $F(0,X\setminus\{x\})=F(1,X\setminus\{x\})$, 
or $F(0,X\setminus\{x\})\neq F(1,X\setminus\{x\})$.  In the former case, 
$\mathsf{factors}(F,\{x\})=\{H\}$ and
\begin{equation}\label{eq:base-1}
C_{v,H}=\top \text{.}
\end{equation}
In the latter case, $\mathsf{factors}(F,\{x\})=\{H_0,H_1\}$ and
\begin{align}
C_{v,H_0} &= \neg x\text{,} \label{eq:base-2}\\
C_{v,H_1} &= x\text{.}\label{eq:base-3}
\end{align}

If $v$ is a node of $T$ with children $w$ and $w'$, we put
\begin{equation}\label{eq:main-node}
C_{v,H}=\bigvee_{(G,G') \in \mathsf{impl}(F,H,X_w,X_{w'})} \left( C_{w,G} \wedge C_{w',G'} \right)\text{.}
\end{equation}

Finally we put 
\begin{equation}\label{eq:cft}
C_{F,T} = C_{r,F} \text{,}
\end{equation}
where $r$ is the root of $T$; note that $F$ itself is a factor of $F$ relative to $X$; 
its models induce the cofactor $1 \colon \{0,1\}^\emptyset \to \{0,1\}$ of $F$, 
the identically $1$ function (over $\emptyset$).

\begin{lemma}\label{lemma:dsdnnf-sub}
Let $F$ be a Boolean function and let $T$ be a vtree, both over the variables $X$.  
Let $v \in T$ and let $H \in \mathsf{factors}(F,X_v)$.  The following holds. 
\begin{itemize}
\item $C_{v,H}$ is a deterministic structured NNF respecting the vtree $T_v$.
\item $C_{v,H}$ computes $H$.
\end{itemize}
\end{lemma}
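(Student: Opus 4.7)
The plan is to prove both claims simultaneously by structural induction on the subtree $T_v$, exactly mirroring the recursive definition of $C_{v,H}$ in equations (\ref{eq:base-1})--(\ref{eq:main-node}). Correctness of the computation will follow directly from Lemma~\ref{lemma:drc}, while the syntactic properties (NNF form, structuredness by $T_v$, determinism of $\vee$-gates, decomposability of $\wedge$-gates) will be read off the construction.

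For the base case, suppose $v$ is a leaf with $X_v=\{x\}$. In the degenerate case $F(0,X\setminus\{x\})=F(1,X\setminus\{x\})$, the set $\mathsf{factors}(F,\{x\})$ is a singleton, so its unique element $H$ must satisfy $\mathsf{sat}(H)=\{0,1\}^{\{x\}}$ by (\ref{eq:part}); hence $H\equiv\top$, which is what $C_{v,H}$ computes. Otherwise $\mathsf{factors}(F,\{x\})=\{H_0,H_1\}$ with $H_b$ induced by the assignment $x\mapsto b$, so $\mathsf{sat}(H_b)=\{b\}$, matching $C_{v,H_0}=\neg x$ and $C_{v,H_1}=x$. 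In all subcases $C_{v,H}$ is trivially an NNF and structured by $T_v$, and contains no $\vee$-gate or $\wedge$-gate, so determinism and decomposability hold vacuously.

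For the inductive step, let $v$ have children $w$ and $w'$. For every factorized implicant $(G,G') \in \mathsf{impl}(F,H,X_w,X_{w'})$, the induction hypothesis applied to $w$ and $w'$ yields that $C_{w,G}$ is a deterministic structured NNF respecting $T_w$ computing $G$, and similarly for $C_{w',G'}$. Then $C_{v,H}$ as defined in (\ref{eq:main-node}) is a disjunction of conjunctions of such circuits, hence an NNF. Each $\wedge$-gate in (\ref{eq:main-node}) has inputs $C_{w,G}$ and $C_{w',G'}$ whose variables lie in $X_w$ and $X_{w'}$ respectively, so it is structured by $v$; together with the inductive structuredness of the subcircuits, $C_{v,H}$ is structured by $T_v$. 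Structuredness implies decomposability of the $\wedge$-gates. Correctness and determinism of the top $\vee$-gate are precisely the content of Lemma~\ref{lemma:drc}: the union in (\ref{eq:main-node-1}) equals $\mathsf{sat}(H)$ and is disjoint, which (combined with the inductive determinism within each $C_{w,G}$ and $C_{w',G'}$) gives that all $\vee$-gates of $C_{v,H}$ are deterministic.

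I do not anticipate a genuine obstacle; this lemma is the place where the earlier structural work (Lemmas~\ref{lemma:factors} and \ref{lemma:drc}) is cashed in, and the proof is a routine verification. The only point requiring a bit of care is the degenerate leaf case, where one must explicitly use (\ref{eq:part}) to justify that the unique factor is equivalent to $\top$; after that, everything is bookkeeping aligned with the recursive definition.
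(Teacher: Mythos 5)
Your proposal is correct and follows essentially the same route as the paper: induction on the depth of $v$, with the base case read off the leaf definitions (\ref{eq:base-1})--(\ref{eq:base-3}) and the inductive case combining Lemma~\ref{lemma:drc} (for correctness and determinism of the new $\vee$-gate) with the induction hypothesis (for structuredness of the conjuncts over $X_w$ and $X_{w'}$). The paper states this as a one-line "routine induction"; your write-up merely fills in the same details, including the correct use of (\ref{eq:part}) in the degenerate leaf case.
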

\begin{proof}
The proof is a routine induction on the depth of $v$ in $T$.  
The base case holds by inspection of (\ref{eq:base-1})-(\ref{eq:base-3}).  
The inductive case holds by inspection of (\ref{eq:main-node}), 
using (\ref{eq:main-node-1}) in Lemma~\ref{lemma:drc} and the induction hypothesis; 
indeed, note that the disjunction arising in (\ref{eq:main-node}) is deterministic because the union in (\ref{eq:main-node-1}) is disjoint, 
and the conjunctions arising in (\ref{eq:main-node}) are structured by $T_v$ 
(namely the left and right conjuncts are over variables $X_w$ and $X_{w'}$ resp., 
where $w$ and $w'$ are the children of $v$ in $T$) by the induction hypothesis.  
\end{proof}

Note that, by Lemma~\ref{lemma:dsdnnf-sub}, it follows that $C_{F,T}$ 
is a deterministic structured NNF computing $F$, canonical in that it is uniquely determined by the vtree $T$ and $F$; 
the notion of \emph{factorized implicant width} of $F$ relative to $T$ arises naturally.

\begin{definition}
Let $F$ be a Boolean function and let $T$ be a vtree, both over the variables $X$.  The \emph{factorized implicant width 
of $F$ relative to $T$}, in symbols $\fiw(F,T)$, is defined by\footnote{Recall that, if $g \in C_{F,T}$ is an $\wedge$-gate arising from (\ref{eq:main-node}), 
we say that $g$ is \emph{structured} by $v \in T$.}    
$$\fiw(F,T)= \max_{v \in T}|\{ g \in C_{F,T} \colon \textup{$g$ is structured by $v$} \}|\text{.}$$
The \emph{factorized implicant width} of $F$, in symbols $\fiw(F)$, is defined by  
$$\fiw(F)= \min\{ \fiw(F,T) \colon \textup{$T$ vtree for $X$} \}\text{.}$$ 
\end{definition}

Relative to its factorized implicant width, 
a Boolean function (of $n$ variables) has linear (in $n$) 
size compilations into canonical deterministic structured forms.

\begin{theorem}\label{th:factorized-forms-main}
A Boolean function $F$ of $n$ variables and factorized implicant width $k$ 
has canonical deterministic structured NNFs of size $O(kn)$.
\end{theorem}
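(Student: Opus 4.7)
\emph{Plan.} I pick a vtree $T$ for the variable set $X$ of $F$ achieving $\fiw(F, T) = k$ and analyze the size of the canonical circuit $C_{F,T}$ defined by (\ref{eq:base-1})--(\ref{eq:cft}). By Lemma~\ref{lemma:dsdnnf-sub} applied at the root $r$ of $T$ (noting that $F$ itself is a factor of $F$ relative to $X$), $C_{F,T}=C_{r,F}$ is a deterministic structured NNF computing $F$; canonicity is immediate since the construction determines $C_{v,H}$ uniquely from $v$, $H$, and $T$, so $C_{F,T}$ depends only on $F$ and $T$.

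To bound the size, I account separately for the $\vee$-gates and the $\wedge$-gates contributed by each node $v \in T$. At an internal node $v$ with children $w$ and $w'$, the $\wedge$-gates structured by $v$ are precisely those appearing in the disjunctions (\ref{eq:main-node}) as $H$ ranges over $\mathsf{factors}(F, X_v)$; their total count is $\sum_H |\mathsf{impl}(F,H,X_w,X_{w'})|$, which by the definition of factorized implicant width is bounded by $\fiw(F,T) = k$. The $\vee$-gates contributed by $v$ number at most $|\mathsf{factors}(F, X_v)|$, one per factor $H$ indexing a disjunction (\ref{eq:main-node}).

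The key step I expect to be the main (mild) obstacle is bounding $|\mathsf{factors}(F, X_v)|$ itself by $k$. Every factor $H \in \mathsf{factors}(F,X_v)$ has a nonempty satisfying set, since by Definition~\ref{def:factor} each cofactor of $F$ is induced by at least one assignment; hence by Lemma~\ref{lemma:drc} the set $\mathsf{impl}(F,H,X_w,X_{w'})$ is nonempty. So distinct $H$'s contribute disjoint and nonempty collections of $\wedge$-gates structured by $v$, giving
$$|\mathsf{factors}(F, X_v)| \leq \sum_{H \in \mathsf{factors}(F,X_v)} |\mathsf{impl}(F,H,X_w,X_{w'})| \leq k.$$
Thus $v$ contributes $O(k)$ gates in total.

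Putting the pieces together, each of the $n$ leaves of $T$ contributes $O(1)$ gates by (\ref{eq:base-1})--(\ref{eq:base-3}); each internal node contributes $O(k)$ gates by the preceding estimates; and $T$ has $O(n)$ nodes overall. Hence $|C_{F,T}| = O(kn)$, yielding the desired canonical deterministic structured NNF. The root and leaf edge cases need no special treatment, as the uniform bounds $|\mathsf{factors}(F, X)| = 1$ at the root and the explicit cases of (\ref{eq:base-1})--(\ref{eq:base-3}) at the leaves are already absorbed into the above counts.
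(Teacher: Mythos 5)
Your proposal is correct and follows essentially the same route as the paper: choose a vtree witnessing $\fiw(F,T)=k$, invoke Lemma~\ref{lemma:dsdnnf-sub} for correctness and canonicity of $C_{F,T}$, and count $O(1)$ gates per leaf and $O(k)$ gates per internal node using the definition of factorized implicant width. Your explicit bound $|\mathsf{factors}(F,X_v)|\leq k$ via nonemptiness of the implicant sets is a slightly more careful accounting of the $\vee$-gates than the paper's remark that each $\wedge$-gate is adjacent to at most three $\vee$-gates, but the argument is the same in substance.
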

\begin{proof}
Let $X$ be the variables of $F$, so that $|X|=n$, and let $T$ be a vtree for $X$ 
witnessing factorized implicant width $k$ for $F$.  The circuit $C_{F,T}$ in (\ref{eq:cft}) 
is a canonical deterministic structured NNF computing $F$ by Lemma~\ref{lemma:dsdnnf-sub}.  

Moreover, we claim that $C_{F,T}$ has size $O(kn)$.  The $n$ leaves of $T$ contribute at most $n+1$ input gates and $n$ $\neg$-gates in $C_{F,T}$.  
The $n-1$ internal nodes of $T$ contribute each at most $k$ $\wedge$-gates (by the definition of factorized implicant width), 
and each such gate is linked with at most $3$ $\vee$-gates.  Hence $C_{F,T}$ contains at most $2n+1+3k(n-1)=O(kn)$ gates.  
\end{proof}

We conclude the section showing that a class of Boolean functions has bounded circuit treewidth 
iff  it has bounded factorized implicant width.  It is sufficient to prove that 
the factorized implicant width of a Boolean function is bounded below and above 
by computable functions of its circuit treewidth.  

For the upper bound, we have 
\begin{equation}\label{eq:fiw-ub}
\fiw(F) \leq \fw(F)^2 \leq 2^{(\ctw(F)+2)2^{\ctw(F)+2}}\text{,} 
\end{equation}
where the first inequality is justified by the observation that 
every $\wedge$-gate $g$ in $C_{F,T}$, structured by a node $v \in T$, 
corresponds to a pair of factors of $F$, 
and the second inequality follows by Lemma~\ref{lemma:jhasuciu}.  

For the lower bound, we verify that 
small factorized implicant width 
implies small circuit treewidth.

\begin{proposition}\label{prop:fiw-ctw}
For all Boolean functions $F$, 
\begin{equation}\label{eq:fiw-lb}
\ctw(F)/3 \leq \fiw(F)\text{.} 
\end{equation}
\end{proposition}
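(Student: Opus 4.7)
The plan is to exhibit a circuit computing $F$ whose treewidth is at most $3\fiw(F)$. The natural choice is the canonical circuit $C_{F,T}$ from (\ref{eq:cft}) for a vtree $T$ attaining $\fiw(F,T)=\fiw(F)=:k$: by construction its gates are organized layer by layer along $T$, which suggests recycling $T$, with every edge subdivided by one fresh intermediate node, as the skeleton of a tree decomposition of the undirected graph underlying $C_{F,T}$.

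Concretely, for each internal $v\in T$ I would let the bag $B_v$ contain all $\wedge$-gates structured by $v$ together with all $\vee$-gates $C_{v,H}$ that occur in $C_{F,T}$; for each leaf $v$, $B_v$ contains the (at most two) literal or constant gates $C_{v,H}$ and the underlying input gate. For each subdivided edge $\{v,u\}$ of $T$ with $u$ the parent of $v$, the intermediate bag $B_{\{v,u\}}$ contains all $\wedge$-gates structured by $u$ together with all gates $C_{v,G}$ serving as inputs to those $\wedge$-gates.

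It then remains to verify the tree decomposition conditions. By inspection of (\ref{eq:main-node}), every wire of $C_{F,T}$ connects either an $\wedge$-gate structured by some $v$ to its output $\vee$-gate $C_{v,H}$ (covered by $B_v$), or a gate $C_{v,G}$ at child layer $v$ to an $\wedge$-gate structured by the parent $u$ (covered by $B_{\{v,u\}}$); every edge of the circuit is therefore contained in some bag. Contiguity is immediate from the construction: each $\wedge$-gate at $v$ sits in $B_v$ together with the intermediate bags on the edges from $v$ to its children (a connected subtree), and each $C_{v,H}$ sits in $B_v$ together with, if consumed above, the single intermediate bag on the edge from $v$ to its parent. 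Finally, the definition of $\fiw$ bounds the number of $\wedge$-gates structured by any node by $k$, and by Lemma~\ref{lemma:drc} every $C_{v,H}$ appearing in $C_{F,T}$ is fed by at least one such $\wedge$-gate, so the $\vee$-gates $C_{v,H}$ at any internal node $v$ are also at most $k$ in number. Hence every bag has size at most $2k$, yielding $\ctw(F)\le\ctw(C_{F,T})\le 2k-1\le 3k=3\fiw(F)$.

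The main obstacle I anticipate is minor bookkeeping near the leaves of $T$ (where the input variable, its negation, and the constant $\top$ must all be accommodated) and near the root (where the only factor of $F$ relative to the full variable set is $F$ itself, so the top bag may contain a single extra $\vee$-gate $C_{r,F}$). The slack between the $2k-1$ bound obtained by the construction and the $3k$ target in the statement comfortably absorbs any additive constant arising from these edge cases, so no further subtlety is needed.
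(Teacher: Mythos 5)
Your proof is correct and follows essentially the same route as the paper's: both take the canonical circuit $C_{F,T}$ for a width-witnessing vtree $T$ and build a tree decomposition whose skeleton is $T$ itself, with bags collecting the $\wedge$-gates structured by each node together with their neighboring gates, bounded via the definition of $\fiw$. Your edge-subdivision bookkeeping merely redistributes the neighbors that the paper places into a single bag via closed neighborhoods, yielding a marginally better constant but no new idea.
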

\begin{proof}
Let $T$ be a vtree for $X$ such that $$\fiw(F,T)=\fiw(F)=k\text{.}$$  
We claim that $\tw(C_{F,T}) \leq 3k$, so that $\ctw(F)\leq 3k$ by Lemma~\ref{lemma:dsdnnf-sub}.

For every gate $g \in C_{F,T}$, let $\mathsf{neigh}(g)$ denote the closed neighborhood of $g$ in the undirected graph 
underlying $C_{F,T}$.  We define a tree decomposition for the undirected graph 
underlying $C_{F,T}$, as follows.  The bags of the tree decomposition have the form 
$$B_v=\{ \mathsf{neigh}(g) \colon \textup{$g$ structured by $v$} \}\text{,}$$
for all $v \in T$.  The root of the tree decomposition is $B_r$.  
The bag $B_w$ has an arc to the bag $B_{w'}$ iff  $w$ has an arc to $w'$ in $T$. 
By definition, $B_v$ contains all the $\wedge$-gates structured by the node $v \in T$; 
there are at most $k$ such gates in $C_{F,T}$ by definition, 
and each such gate has indegree $2$ and outdegree $1$ by construction (therefore degree $3$ in the undirected graph 
underlying $C_{F,T}$).  Hence $|B_v| \leq 3k$. We check the desired properties.  

Since every wire of $C_{F,T}$ enters or leaves an $\wedge$-gate and 
every $\wedge$-gate is structured by some $v \in T$, the edges of the undirected graph 
underlying $C_{F,T}$ are covered by the tree decomposition.  Moreover let $g$ be a gate of $C_{F,T}$ 
occurring in two distinct bags $B_w$ and $B_{w'}$.  Then $g \in \mathsf{neigh}(h) \cap \mathsf{neigh}(h')$ 
where $h$ is a $\wedge$-gate structured by $w$ and $h'$ is a $\wedge$-gate structured by $w'$.  
By construction of $C_{F,T}$, either $w$ has an arc to $w'$ in $T$ or $w'$ has an arc to $w$ in $T$.  
Hence $B_w$ and $B_{w'}$ are adjacent. 
\end{proof}

\subsubsection{Sentential Decision Width and Sentential Decision Diagrams}

We show that the notion of factorized implicant lies at the core 
of (and provides fresh insight on) the canonical construction of SDDs for Boolean functions.  

We prepare the actual description of the construction in two steps.  
The first step yields, by a straightforward generalization of Lemma~\ref{lemma:drc}, 
a factorized implicant decomposition reminiscent of (\ref{eq:syntactic-fif}) for unions of factors.

\begin{lemma}\label{lemma:union-drc}
Let $F=F(X)$ be a Boolean function,  
let $Y$ and $Y'$ be disjoint subsets of $X$, 
and let $\mathcal{H} \subseteq \mathsf{factors}(F,Y \cup Y')$.  Then 
$$\{ \mathsf{sat}(G) \times \mathsf{sat}(G') \colon (G,G') \in \mathsf{impl}(F,H,Y,Y'), H \in \mathcal{H} \}$$
forms a disjoint rectangle cover of $\bigvee_{H \in \mathcal{H}} H$.
\end{lemma}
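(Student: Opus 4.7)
The plan is to reduce the lemma directly to Lemma~\ref{lemma:drc} by applying it separately to each $H \in \mathcal{H}$ and then taking the union. Concretely, Lemma~\ref{lemma:drc} already gives, for every individual $H \in \mathcal{H}$, the identity
\[
\mathsf{sat}(H) = \bigcup_{(G,G') \in \mathsf{impl}(F,H,Y,Y')} \bigl(\mathsf{sat}(G) \times \mathsf{sat}(G')\bigr),
\]
with the union disjoint. Taking the union over $H \in \mathcal{H}$ on both sides immediately yields
\[
\mathsf{sat}\!\left(\bigvee_{H\in\mathcal{H}} H\right) = \bigcup_{H \in \mathcal{H}} \mathsf{sat}(H) = \bigcup_{H \in \mathcal{H}} \bigcup_{(G,G') \in \mathsf{impl}(F,H,Y,Y')} \bigl(\mathsf{sat}(G) \times \mathsf{sat}(G')\bigr),
\]
which is the required rectangle cover (each summand is a rectangle over $Y \cup Y'$ with underlying partition $(Y,Y')$).

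The only thing left to verify is disjointness of the resulting outer union. Within a fixed $H$, disjointness is provided by Lemma~\ref{lemma:drc}. Across distinct $H_1, H_2 \in \mathcal{H}$, I would argue as follows: by (\ref{eq:part}), distinct factors of $F$ relative to $Y \cup Y'$ have disjoint models, so $\mathsf{sat}(H_1) \cap \mathsf{sat}(H_2) = \emptyset$. On the other hand, by the definition of $\mathsf{impl}$ (cf.\ (\ref{eq:rect-cont})), every rectangle $\mathsf{sat}(G) \times \mathsf{sat}(G')$ with $(G,G') \in \mathsf{impl}(F,H_i,Y,Y')$ is contained in $\mathsf{sat}(H_i)$. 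Hence any two rectangles coming from distinct $H_1, H_2 \in \mathcal{H}$ are disjoint, and combined with the intra-$H$ disjointness this gives total disjointness of the whole union.

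There is essentially no obstacle here: the statement is a cosmetic generalization of Lemma~\ref{lemma:drc} in which the single factor $H$ is replaced by a union $\bigvee_{H \in \mathcal{H}} H$, and the new ingredient needed, namely the pairwise disjointness of the sat-sets of distinct factors, is already recorded in (\ref{eq:part}). The short proof writes itself once these two ingredients are assembled.
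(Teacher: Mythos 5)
Your proof is correct and follows essentially the same route as the paper's: apply Lemma~\ref{lemma:drc} to each $H\in\mathcal{H}$ separately, take the union, and derive disjointness from (\ref{eq:part}) (distinct factors have disjoint models) together with the intra-$H$ disjointness from Lemma~\ref{lemma:drc}. The only difference is that you make explicit the containment of each rectangle in $\mathsf{sat}(H_i)$ when arguing cross-$H$ disjointness, which the paper leaves implicit.
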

In terms of circuits, the statement means that 
\begin{equation}\label{eq:syntactic-sdd}
\bigvee_{H \in \mathcal{H}} C_H \equiv \bigvee_{\substack{H \in \mathcal{H}\\(G,G') \in \mathsf{impl}(F,H,Y,Y')}} \left(C_G \wedge C_{G'}\right)\text{,}
\end{equation}
where $C_H$, $C_G$, and $C_{G'}$ are as in (\ref{eq:syntactic-fif}), the $\vee$-gate is deterministic, 
and the $\wedge$-gates are decomposable. 
\begin{proof}[Proof of Lemma~\ref{lemma:union-drc}] We have
\begin{align*}
\mathsf{sat}\left(\bigvee_{H \in \mathcal{H}} H\right) &= \bigcup_{H \in \mathcal{H}} \mathsf{sat}(H) & \text{} \\
&= \bigcup_{\substack{H \in \mathcal{H}\\(G,G') \in \mathsf{impl}(F,H,Y,Y')}} \left(\mathsf{sat}(G) \times \mathsf{sat}(G')\right) \text{,}
\end{align*}
where the second equality follows by applying Lemma~\ref{lemma:drc} to $H$.  
We claim that the union is disjoint.  Indeed, by (\ref{eq:part}), 
distinct factors of $F$ relative to $Y \cup Y'$, in particular those in $\mathcal{H}$, 
have disjoint models; 
moreover, by Lemma~\ref{lemma:drc}, distinct implicants of a factor have disjoint models.
\end{proof}

The second step enforces the properties of a proper 
sentential decision over the factorized implicant form (\ref{eq:syntactic-sdd}) 
given by Lemma~\ref{lemma:union-drc}, as follows.  For each $G \in \mathsf{factors}(F,Y)$, 
let 
\begin{align*}
\mathcal{S}_G &=\{ G' \colon (G,G') \in \mathsf{impl}(F,H,Y,Y'), H \in \mathcal{H} \} \\
&\subseteq \mathsf{factors}(F,Y') 
\end{align*}
and observe that  
\begin{equation}\label{eq:noncanonical-sdd-form}
\bigvee_{G \in \mathsf{factors}(F,Y)} \left(C_G \wedge \left(\bigvee_{G' \in \mathcal{S}_G} C_{G'}\right)\right)\text{,} 
\end{equation}
where empty disjunctions are implemented by $\bot$, is equivalent to $\bigvee_{H \in \mathcal{H}} C_H$ 
and is a sentential decision form as the factors of $F$ relative to $Y$ partition $\{0,1\}^Y$.  

However, (\ref{eq:noncanonical-sdd-form}) is not a canonical form because 
distinct $G_1$ and $G_2$ in $\mathsf{factors}(F,Y)$ 
can give $\mathcal{S}_{G_1}=\mathcal{S}_{G_2}$.  Let  
\begin{align*}
\{\mathcal{S}_1,\ldots,\mathcal{S}_m\} &= \{\mathcal{S}_G \colon G \in \mathsf{factors}(F,Y)\}\text{,}\\
\mathcal{P}_i &= \{ G \in \mathsf{factors}(F,Y) \colon \mathcal{S}_G=\mathcal{S}_i \}\text{,}
\end{align*}
for all $i \in [m]$, and 
\begin{align*}\label{eq:sent-dec}
\sd(F,\mathcal{H},Y,Y') &=\{(\mathcal{P}_1,\mathcal{S}_1),\ldots,(\mathcal{P}_m,\mathcal{S}_m)\} \\ 
&\subseteq 2^{\mathsf{factors}(F,Y)} \times 2^{\mathsf{factors}(F,Y') \cup \{\bot\}} \text{.} 
\end{align*}
It is readily observed that
\begin{equation}\label{eq:canonical-sdd-form}
\bigvee_{(\mathcal{P},\mathcal{S}) \in \sd(F,\mathcal{H},Y,Y')}   \left( \left( \bigvee_{P \in \mathcal{P}} C_P \right) \wedge \left( \bigvee_{S \in \mathcal{S}} C_S \right) \right) 
\end{equation}
is equivalent to $\bigvee_{H \in \mathcal{H}}C_H$ 
and moreover:
\begin{itemize}
\item[(SD1)] $\top \equiv \bigvee_{i \in [m]}\bigvee_{P \in \mathcal{P}_i} C_P$;
\item[(SD2)] $\bot \equiv \left( \bigvee_{P \in \mathcal{P}_i} C_P \right) \wedge \left( \bigvee_{P \in \mathcal{P}_j} C_P \right)$ for $i \neq j$ in $[m]$;
\item[(SD3)] $\left( \bigvee_{S \in \mathcal{S}_i} C_S \right) \not\equiv \left( \bigvee_{S \in \mathcal{S}_j} C_S \right)$ for $i \neq j$ in $[m]$.
\end{itemize}

We now use the above development to describe a recursive construction of a canonical SDD for a given Boolean function.  
Let $F$ be a Boolean function and let $T$ be a vtree, both over $X$.  For every node $v \in T$ 
and every subset $\mathcal{H}$ of factors of $F$ relative to $X_v$, 
we construct a  
circuit $C_{v,\mathcal{H}}$, as follows. 

If $v$ is a leaf of $T$, then $X_v=\{x\}$ for some variable $x \in X$.  
There are two cases.  Either $F(0,X\setminus\{x\})=F(1,X\setminus\{x\})$, 
or $F(0,X\setminus\{x\})\neq F(1,X\setminus\{x\})$.  In the former case, 
$\mathsf{factors}(F,\{x\})=\{H\}$ 
and: $C_{v,\emptyset} = \bot$;  $C_{v,\{H\}} = \top$.
In the latter case, $\mathsf{factors}(F,\{x\})=\{H_0,H_1\}$ 
and: 
$C_{v,\emptyset} = \bot$; 
$C_{v,\{H_0\}} = \neg x$; 
$C_{v,\{H_1\}} = x$; 
$C_{v,\{H_0,H_1\}} = \top$. 

If $v$ is a node of $T$ with children $w$ and $w'$, we put
\begin{equation}\label{eq:cvH}
C_{v,\mathcal{H}}=\bigvee_{(\mathcal{P},\mathcal{S}) \in \sd(F,\mathcal{H},Y,Y')} \left( C_{w,\mathcal{P}} \wedge C_{w',\mathcal{S}} \right)\text{.} 
\end{equation}

Finally we put, where $r$ is the root of $T$, 
\begin{equation}\label{eq:sdd-root}
S_{F,T} = C_{r,\{F\}} \text{.}
\end{equation}

\begin{lemma}\label{lemma:sdd-constr-correct}
Let $F$ be a Boolean function and let $T$ be a vtree, both over the variables $X$.  
Let $v \in T$ and let $\mathcal{H} \subseteq \mathsf{factors}(F,X_v)$.  
\begin{itemize}
\item $C_{v,\mathcal{H}}$ is a canonical SDD respecting the vtree $T_v$.
\item $C_{v,\mathcal{H}}$ computes $\bigvee_{H \in \mathcal{H}}H$.
\end{itemize}
\end{lemma}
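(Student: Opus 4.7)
The plan is to proceed by induction on the depth of $v$ in $T$, in close parallel to the proof of Lemma~\ref{lemma:dsdnnf-sub}. For the base case $v$ a leaf, the construction enumerates a bounded number of subsets $\mathcal{H} \subseteq \mathsf{factors}(F,\{x\})$ (at most $\{\emptyset, \{H\}\}$ or $\{\emptyset,\{H_0\},\{H_1\},\{H_0,H_1\}\}$ depending on whether $x$ is fictitious in $F$), and each case is handled by a constant or literal, which is a canonical SDD (over any vtree containing $x$) by the conventions recalled in the preliminaries. Semantic correctness follows by inspection: $\bot \equiv \bigvee_{H \in \emptyset} H$, $\top \equiv H$ when $\{H\}$ is the only factor, and $x, \neg x$ compute the models of $H_1, H_0$ respectively.

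For the inductive case, assume the statement holds for the children $w$ and $w'$ of $v$. The semantic claim that $C_{v,\mathcal{H}}$ computes $\bigvee_{H \in \mathcal{H}} H$ follows by chaining three equivalences: first, Lemma~\ref{lemma:union-drc} gives (\ref{eq:syntactic-sdd}); second, distributing and collecting terms by their left factor $G \in \mathsf{factors}(F,X_w)$ yields (\ref{eq:noncanonical-sdd-form}); and third, grouping indices by the equivalence classes $\mathcal{P}_1, \ldots, \mathcal{P}_m$ (defined via equality of $\mathcal{S}_G$) and rewriting using the induction hypothesis that $C_{w,\mathcal{P}}$ and $C_{w',\mathcal{S}}$ compute $\bigvee_{P \in \mathcal{P}} P$ and $\bigvee_{S \in \mathcal{S}} S$ respectively, gives (\ref{eq:canonical-sdd-form}), which is exactly $C_{v,\mathcal{H}}$ as defined in (\ref{eq:cvH}).

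For the structural claim, structuredness by $T_v$ is immediate from (\ref{eq:cvH}): each top $\wedge$-gate combines $C_{w,\mathcal{P}}$ over $X_w$ with $C_{w',\mathcal{S}}$ over $X_{w'}$, and both subcircuits respect $T_w$ and $T_{w'}$ by the induction hypothesis. The SDD conditions (1)--(3) from the preliminaries are then verified one by one. For (1), the $\mathcal{P}_i$'s partition $\mathsf{factors}(F,X_w)$ by construction, so by the induction hypothesis and (\ref{eq:part}) we have $\bigvee_{i}C_{w,\mathcal{P}_i} \equiv \bigvee_{G \in \mathsf{factors}(F,X_w)} G \equiv \top$. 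For (2), distinct $\mathcal{P}_i, \mathcal{P}_j$ are disjoint subsets of $\mathsf{factors}(F,X_w)$, whose members in turn have pairwise disjoint models by (\ref{eq:part}), so $C_{w,\mathcal{P}_i} \wedge C_{w,\mathcal{P}_j} \equiv \bot$; this in particular certifies determinism of the top $\vee$-gate.

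The main obstacle, and the one worth checking carefully, is condition (3), the canonical distinctness $C_{w',\mathcal{S}_i} \not\equiv C_{w',\mathcal{S}_j}$ for $i \neq j$. By definition of the partition $\{\mathcal{P}_1,\ldots,\mathcal{P}_m\}$, we have $\mathcal{S}_i \neq \mathcal{S}_j$ as subsets of $\mathsf{factors}(F,X_{w'}) \cup \{\bot\}$; since $\bot$ contributes no models and the nontrivial factors of $F$ relative to $X_{w'}$ have pairwise disjoint models by (\ref{eq:part}), the sets $\bigcup_{S \in \mathcal{S}_i} \mathsf{sat}(S)$ and $\bigcup_{S \in \mathcal{S}_j} \mathsf{sat}(S)$ differ, so $\bigvee_{S \in \mathcal{S}_i}S \not\equiv \bigvee_{S \in \mathcal{S}_j}S$, and the induction hypothesis identifies these with $C_{w',\mathcal{S}_i}$ and $C_{w',\mathcal{S}_j}$. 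Finally, canonicity of the whole circuit follows because the construction (\ref{eq:cvH}) is entirely determined by $F$, $T$, $v$, $\mathcal{H}$ together with the (canonical, by induction) subcircuits $C_{w,\mathcal{P}}$ and $C_{w',\mathcal{S}}$; applying this to the root yields that $S_{F,T} = C_{r,\{F\}}$ in (\ref{eq:sdd-root}) is the desired canonical SDD for $F$ respecting $T$.
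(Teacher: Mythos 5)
Your proof is correct and follows essentially the same route as the paper's (much terser) argument: induction on the depth of $v$, with the base case by construction and the inductive case assembled from Lemma~\ref{lemma:union-drc}, the forms (\ref{eq:noncanonical-sdd-form}) and (\ref{eq:canonical-sdd-form}), and properties (SD1)--(SD3). Your explicit verification of condition (3) — using that factors are nonempty and pairwise disjoint by (\ref{eq:part}), so distinct sets $\mathcal{S}_i \neq \mathcal{S}_j$ yield inequivalent disjunctions — is a detail the paper leaves implicit, and it is handled correctly.
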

\begin{proof}
By induction on the depth of $v$ in $T$.  The base case holds by construction. 
The inductive case holds inspection of (\ref{eq:cvH}), 
using (\ref{eq:canonical-sdd-form}) together with (SD1)-(SD3), (\ref{eq:noncanonical-sdd-form}), Lemma~\ref{lemma:union-drc}, 
and the induction hypothesis.
\end{proof}

Therefore $S_{F,T}$ is the canonical SDD computing $F$, 
uniquely determined by the vtree $T$.  We recall the notion of \emph{sentential decision width} 
of $F$ relative to $T$.

\begin{definition}\label{def:sdd-width}
Let $F$ be a Boolean function and let $T$ be a vtree, both over $X$.  
The \emph{sentential decision width of $F$ relative to $T$}, in symbols $\sdw(F,T)$, is defined by  
$$\sdw(F,T)= \max_{v \in T}|\{ g \in S_{F,T} \colon \textup{$g$ is structured by $v$} \}|\text{.}$$
The \emph{SDD width} of $F$, in symbols $\sdw(F)$, is defined by  
$$\sdw(F)= \min\{ \sdw(F,T) \colon \textup{$T$ vtree for $X$} \}\text{.}$$ 
\end{definition}

It is well known that OBDDs are canonical SDDs respecting linear vtrees, 
i.e.\ vtrees where every left child is a leaf \cite{DBLP:conf/ijcai/Darwiche11}; in this case, 
the notion of SDD width in Definition~\ref{def:sdd-width} reduces to the 
usual notion of OBDD width \cite{WegenerBook}.  Moreover, 
as a Boolean function (of $n$ variables) has linear (in $n$) OBDD size 
parameterized by its OBDD width, likewise it has linear SDD size parameterized 
by its SDD width.

\begin{theorem}\label{th:linear-sdd-size}
A Boolean function $F$ of $n$ variables and SDD width $k$ has canonical SDD size $O(kn)$.
\end{theorem}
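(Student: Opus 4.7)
The plan is to imitate, almost verbatim, the counting argument used in the proof of Theorem~\ref{th:factorized-forms-main}, since the canonical SDD $S_{F,T}$ of Lemma~\ref{lemma:sdd-constr-correct} has the same high-level shape as $C_{F,T}$: literals at the leaves, $\wedge$-gates structured by internal nodes, and $\vee$-gates collecting $\wedge$-gates at each level. First I would fix a vtree $T$ for $X$ witnessing $\sdw(F,T)=\sdw(F)=k$, and invoke Lemma~\ref{lemma:sdd-constr-correct} to conclude that $S_{F,T}$ is a canonical SDD computing $F$.

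Next I would split the gate count of $S_{F,T}$ into three groups: (i) input and literal gates at the leaves of $T$; (ii) $\wedge$-gates structured by the internal nodes of $T$; and (iii) $\vee$-gates. Group (i) contributes $O(n)$ gates since each of the $n$ leaves of $T$ contributes at most $O(1)$ literal/constant gates (namely $x,\neg x,\bot,\top$). Group (ii) contributes at most $k(n-1)=O(kn)$ gates, since by Definition~\ref{def:sdd-width} each internal node of $T$ structures at most $\sdw(F,T)=k$ gates of $S_{F,T}$ (and all structured gates in $S_{F,T}$ are the fanin-$2$ $\wedge$-gates arising in the decomposition (\ref{eq:cvH})), and $T$ has $n-1$ internal nodes.

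The step that needs an extra word is bounding group (iii), because the SDD width does not directly count $\vee$-gates. Here I would use the same charging idea implicit in the proof of Theorem~\ref{th:factorized-forms-main}: inspecting (\ref{eq:cvH}), every $\vee$-gate of $S_{F,T}$ other than the root is a child of some $\wedge$-gate, and each $\wedge$-gate has fanin exactly $2$, so the number of $\wedge$-to-$\vee$ edges is at most $2$ times the number of $\wedge$-gates. Thus the number of $\vee$-gates is at most $1+2k(n-1)=O(kn)$. Summing the three groups gives $|S_{F,T}|=O(kn)$, and the theorem follows.

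I do not anticipate any genuine obstacle: the only mild subtlety is that Definition~\ref{def:sdd-width} counts the $\wedge$-gates of $S_{F,T}$ across all $C_{v,\mathcal{H}}$ that occur (canonicality of the construction ensures that distinct $C_{v,\mathcal{H}}$'s share gates correctly in the DAG), so the bound $\sum_v k = k(n-1)$ captures the global number of $\wedge$-gates rather than a per-$\mathcal{H}$ count. Once that observation is made, the counting is entirely routine and essentially a copy of the argument already carried out for $C_{F,T}$.
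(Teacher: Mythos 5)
Your proposal is correct and follows essentially the same route as the paper's proof: fix a width-witnessing vtree, invoke Lemma~\ref{lemma:sdd-constr-correct} for canonicity and correctness of $S_{F,T}$, bound the literal gates by $O(n)$, the $\wedge$-gates by $k$ per internal node of $T$, and charge the $\vee$-gates to the $\wedge$-gates they are wired to. The paper compresses the $\vee$-gate accounting into the remark that each $\wedge$-gate is linked with at most $3$ $\vee$-gates, but this is the same counting you carry out.
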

\begin{proof}
Let $X$ be the variables of $F$, so that $|X|=n$, and let $T$ be a vtree for $X$ 
witnessing SDD width $k$ for $F$.  The circuit $S_{F,T}$ in (\ref{eq:sdd-root}) 
is a canonical SDD by Lemma~\ref{lemma:sdd-constr-correct}.  

Moreover, we claim that $S_{F,T}$ has size $O(kn)$.  The $n$ leaves of $T$ contribute at most $2(n+1)$ input or negation gates in $C_{F,T}$.  
The $n-1$ internal nodes of $T$ contribute each at most $k$ $\wedge$-gates (by the definition of SDD width), 
and each such gate is linked with at most $3$ $\vee$-gates.  Hence $C_{F,T}$ contains at most $2(n+1)+3k(n-1)=O(kn)$ gates.  
\end{proof}

We conclude observing that, for classes of Boolean functions, 
bounded circuit treewidth and bounded SDD width collapse.  Indeed, 
on the one hand, the SDD width of a Boolean function $F$ is bounded above by a computable function 
of its circuit treewidth, namely,
\begin{equation}\label{eq:sdd-ub}
\sdw(F) \leq 2^{2 \cdot \fw(F)+1} \leq 2^{2^{(\ctw(F)+2)2^{\ctw(F)+1}+1}+1} \text{,}
\end{equation}
since in the canonical SDD $S_{F,T}$ for $T$ 
every $\wedge$-gate $g$ structured by a node $v \in T$  
corresponds to a pair of sets of factors of $F$ (plus $\bot$), 
and $\fw(F)$ is bounded above by $\ctw(F)$ as in Lemma~\ref{lemma:jhasuciu}.  
On the other hand, for all Boolean functions $F$, along the lines of Proposition~\ref{prop:fiw-ctw}, 
\begin{equation}\label{eq:sdd-lb}
\ctw(F)/3 \leq \sdw(F)\text{.}
\end{equation}

By combining (\ref{eq:fiw-ub})-(\ref{eq:fiw-lb}) and (\ref{eq:sdd-ub})-(\ref{eq:sdd-lb}), 
the factorized implicant width (resp., SDD width) of a Boolean function is squeezed between computable functions of its SDD width (resp., factorized implicant width).

\section{Query Compilation}\label{sect:part2}

In this section, we show that inversions in unions of 
conjunctive queries, with or without inequalities, 
imply large deterministic structured circuits for their lineages.  

Let $\sigma$ be a relational vocabulary.  
A \emph{union of conjunctive queries (UCQs) with inequalities} 
$Q$ is a disjunction of existentially closed 
conjunctions of atoms $Rx_1 \cdots x_m$ 
and inequalities $x \neq y$, 
where $R \in \sigma$ and $x$, $y$, $x_i$ are variables, 
$i \in [m]$.  We call $Q$ a UCQs if it does not contain inequalities.
The \emph{lineage of a Boolean query $Q$ over a database $D$} 
is a Boolean function $L(Q,D)$ whose Boolean variables 
are the tuples in $D$ such that, for every subdatabase $D' \subseteq D$, 
it holds that $D' \models Q$ iff  $b_{D'} \models L(Q,D)$, 
where $b_{D'} \colon D \to \{0,1\}$ is defined by $b_{D'}(t)=1$ iff $t \in D'$.  
A \emph{lineage of a Boolean query} is a lineage of the query over some database.

\subsection{Inversions Imply Large De\-ter\-mi\-ni\-stic Stru\-ctu\-red Forms}

We prove the main result.  For all $k,n \geq 1$, let $X=\{ x_{l} \colon l \in [n] \}$, 
$Y=\{ y_{m} \colon m \in [n] \}$, 
$Z^i=\{ z^i_{l,m} \colon l,m \in [n] \}$ for $i \in [k]$, 
and $Z=\bigcup_{i \in [k]}Z^i$.  For $i \in [k-1]$, let:
\begin{align*}
H^0_{k,n}(X,Z^1) &= \bigvee_{l,m \in [n]} (x_{l} \wedge z^1_{l,m})\text{,}\\
H^i_{k,n}(Z^i,Z^{i+1}) &= \bigvee_{l,m \in [n]} (z^i_{l,m} \wedge z^{i+1}_{l,m})\text{,}\\
H^k_{k,n}(Z^k,Y) &= \bigvee_{l,m \in [n]} (z^k_{l,m} \wedge y_{m})\text{.}
\end{align*}

In \cite[Proposition~7]{DBLP:journals/mst/JhaS13} and \cite[Theorem~3.9]{DBLP:conf/icdt/JhaS12} 
Jha and Suciu show the following, resp.\ for UCQs and UCQs with inequalities.  

\begin{lemma}\label{lemma:inv-impl-h}
Let $Q$ be a UCQs with or without inequalities.\footnote{As a technical assumption, we assume that all queries and databases are ranked \cite{Suciu:2011:PD:2031527}.}   
If $Q$ \lq\lq contains an inversion of length $k \geq 1$\rq\rq, 
then for every $n \geq 1$ there exist a lineage $F(X)$ of $Q$ on $O(n^2)$ variables and assignments $b_i \colon X_i \to \{0,1\}$ for $X_i \subseteq X$ and $i=0,1,\ldots,k$ 
such that $$F(b_i,X \setminus X_i) \equiv H_{k,n}^i\text{.}$$
\end{lemma}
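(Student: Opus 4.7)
The plan is to follow the canonical construction used by Jha and Suciu for the two cited results, rereading it under the present uniform notation for $H_{k,n}^0,\ldots,H_{k,n}^k$. The starting point is to unpack the definition of an \emph{inversion of length $k$} in $Q$: it is a sequence of $k+1$ atom positions in (the conjuncts of) $Q$ connected by shared variables in such a way that the \lq\lq direction\rq\rq\ of the shared variables alternates. This is precisely a chain-like pattern that can be witnessed by a database whose tuples are organized in $k+2$ layers, and whose subdatabases satisfying $Q$ correspond to choices of a \lq\lq path\rq\rq\ through the chain.

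Concretely, I would build a layered database $D=D_n$ whose tuples are in bijection with the Boolean variables listed in the statement: layer $0$ contributes the $n$ variables $X$, each intermediate layer $i \in [k]$ contributes the $n^2$ variables $Z^i=\{z^i_{l,m} : l,m \in [n]\}$, and layer $k{+}1$ contributes the $n$ variables $Y$. The indexing pair $(l,m)$ encodes an \lq\lq edge\rq\rq\ between the left and right endpoints of each intermediate layer, with the inversion dictating how the $l$- and $m$-coordinates are shared across consecutive layers. The total number of variables is $2n+kn^2=O(n^2)$ since $k$ is fixed. Call this lineage $F(X\cup Z\cup Y)$.

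For each $i \in \{0,\ldots,k\}$ I would take $X_i$ to be the set of \emph{all} tuple variables except those appearing in $H_{k,n}^i$, and define $b_i$ by assigning the value $1$ to every variable in $X_i$. Concretely, this sets all layers other than the two free layers of $H_{k,n}^i$ to be \lq\lq fully present\rq\rq\ in the subdatabase. The inversion property then guarantees that $Q$ becomes true on the induced subdatabase exactly when we can witness a single pair $(l,m)$ in the two free layers realizing the remaining link of the chain, which is literally $\bigvee_{l,m}(u_l \wedge v_{l,m})$ or $\bigvee_{l,m}(u_{l,m} \wedge v_{l,m})$ or $\bigvee_{l,m}(u_{l,m}\wedge v_m)$, depending on $i \in \{0\}$, $i \in [k{-}1]$, or $i=k$. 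This matches $H_{k,n}^i$ on the nose.

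The main obstacle is ensuring that $F(b_i,X \setminus X_i)$ is \emph{equal} to $H_{k,n}^i$, not merely logically implied by it. Two subtleties need care: first, when $Q$ has inequalities, the indexing of tuples inside each layer must be chosen (using the ranked-database convention) so that turning a whole layer on never falsifies any inequality between variables of $Q$ along the chain; second, one must check that no conjunct of $Q$ outside the active chain contributes spurious constraints on the two free layers, which is where the combinatorial definition of an inversion does the real work by isolating the \lq\lq critical\rq\rq\ chain inside $Q$. Since these two points are precisely what the proofs of \cite[Proposition~7]{DBLP:journals/mst/JhaS13} and \cite[Theorem~3.9]{DBLP:conf/icdt/JhaS12} establish, our proof would reduce to quoting their database construction and observing that the resulting cofactors coincide with the $H_{k,n}^i$ above.
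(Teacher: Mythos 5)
The paper does not prove this lemma at all: it is imported verbatim from Jha and Suciu (\cite[Proposition~7]{DBLP:journals/mst/JhaS13} for UCQs and \cite[Theorem~3.9]{DBLP:conf/icdt/JhaS12} for UCQs with inequalities), and the authors explicitly decline even to define ``inversion,'' since only the stated implication is used downstream. Your plan of ultimately ``quoting their database construction'' therefore lands in exactly the same place as the paper, and identifying the layered database whose tuples realize $X$, $Z^1,\ldots,Z^k$, $Y$ is the right picture of what that construction does.

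However, the one concrete choice you do commit to is wrong. You take $X_i$ to be all tuple variables outside the two free layers of $H^i_{k,n}$ (fine) but then set $b_i \equiv 1$ on $X_i$, i.e., you make every suppressed layer ``fully present.'' Since the lineage of a UCQ is a \emph{monotone} Boolean function, turning entire layers on tends to satisfy the other disjuncts of the lineage outright and collapses the cofactor to something strictly weaker than $H^i_{k,n}$. Already for the length-$1$ inversion $R(x),S(x,y)\ \vee\ S(x,y),T(y)$, whose lineage is $\bigvee_{l,m}(r_l\wedge s_{l,m})\vee\bigvee_{l,m}(s_{l,m}\wedge t_m)$, setting all $t_m=1$ yields $\bigvee_{l,m}s_{l,m}$, not $H^0_{1,n}$; the correct assignment sets the suppressed tuples \emph{absent} ($0$), which kills the competing disjuncts and leaves exactly $\bigvee_{l,m}(r_l\wedge s_{l,m})$. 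For longer inversions the Jha--Suciu construction likewise zeroes out (rather than saturates) the layers not appearing in $H^i_{k,n}$, possibly with a careful mix dictated by the query's other conjuncts. Since the equivalence $F(b_i,X\setminus X_i)\equiv H^i_{k,n}$ is an exact equality of functions, this sign error is not cosmetic: with $b_i\equiv 1$ the claimed identity is simply false in the base example. If you intend to defer to the cited proofs, you should either state the assignments as they define them or leave $b_i$ unspecified.
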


As we only need the implication stated in Lemma~\ref{lemma:inv-impl-h}, we omit the 
technical definition of the notion of inversion \cite{DBLP:conf/pods/DalviS07a}.   
The following statement unifies and generalizes analogous results 
by Jha and Suciu for UCQs with inequalities vs.\ OBDDs \cite[Theorem~3.9]{DBLP:conf/icdt/JhaS12} and by Beame and Liew 
for UCQs vs.\  SDDs \cite[Theorem~4.6]{DBLP:conf/uai/BeameL15}.

\begin{theorem}\label{th:main-ineq}
Let $Q$ be a UCQs with or without inequalities.  If $Q$ \lq\lq contains an inversion of length $k \geq 1$\rq\rq,  
then for every $n \geq 1$ there exists a lineage $F$ of $Q$ on $O(n^2)$ variables 
whose deterministic structured NNF size is $2^{\Omega(n/k)}$.
\end{theorem}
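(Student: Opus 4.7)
My plan is to adapt the lower-bound strategy of Beame and Liew for SDDs to arbitrary deterministic structured NNFs, by feeding Lemma~\ref{lemma:inv-impl-h} into the rectangle-cover/rank machinery provided by Theorems~\ref{th:struct} and~\ref{th:rklb}.

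First, I invoke Lemma~\ref{lemma:inv-impl-h} to obtain a lineage $F$ of $Q$ on $N = O(n^2)$ variables together with cofactor assignments $b_0, \ldots, b_k$ such that $F(b_i, X \setminus X_i) \equiv H^i_{k,n}$. Suppose $C$ is a deterministic structured NNF of $F$ respecting a vtree $T$; I locate, by the classical tree separator lemma, a node $v \in T$ with $|X_v| \in [N/3, 2N/3]$ and set $A = X_v$, $B = X \setminus X_v$. By Theorem~\ref{th:struct} applied at $v$, $F$ has a disjoint rectangle cover of size at most $|C|$ with underlying partition $(A, B)$. For every $i$, fixing in each rectangle the variables in $X_i$ to the values prescribed by $b_i$ produces a disjoint rectangle cover of $H^i_{k,n}$ of the same size, with underlying partition $(A_i, B_i) = (A \cap (X \setminus X_i), B \cap (X \setminus X_i))$.

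Next, I leverage the structure of $H^i_{k,n}$. Its negation is a disjointness-style conjunction over $n^2$ matched pairs of variables, so a computation in the spirit of~(\ref{eq:rkdf}) gives
\[
\mathsf{rank}(\mathsf{cm}(H^i_{k,n}, A_i, B_i)) \;\geq\; 2^{s_i} - 1,
\]
where $s_i$ counts the matched pairs $(l,m)$ whose two variables fall on opposite sides of $(A_i, B_i)$. To extract a large $s_i$ from the balanced cut of $F$, I pigeonhole along the inversion chain: for each $(l,m) \in [n]^2$, view $x_l, z^1_{l,m}, \ldots, z^k_{l,m}, y_m$ as a chain of length $k+2$; the total number of $AB$-adjacencies across all such chains equals $\sum_{i=0}^{k} s_i$. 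A case analysis on how $A$ intersects the layers $X, Z^1, \ldots, Z^k, Y$, combined with $|A|, |B| \geq N/3$, yields $\sum_{i=0}^{k} s_i = \Omega(n)$, so by pigeonhole some $s_i \geq \Omega(n/(k+1)) = \Omega(n/k)$. Theorem~\ref{th:rklb} (applied, if necessary, after fixing unsplit variables on the larger side to enforce the $|A_i| = |B_i|$ requirement, which only lowers the rank) then gives $|C| \geq 2^{s_i} - 1 = 2^{\Omega(n/k)}$.

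The hard part will be the chain-pigeonhole step, namely proving the uniform lower bound $\sum_i s_i = \Omega(n)$ against every balanced cut. Adversarial configurations, in which $X$ and $Y$ lie entirely on one side while $A$ is dispersed across the interior $Z$-layers, are particularly delicate: the balance hypothesis on $|A|$ must then be transmuted into forced $AB$-adjacencies inside the $Z$-layers themselves, using the per-layer sparsity of variables. A secondary technicality is the exact balance $|A_i| = |B_i|$ demanded by Theorem~\ref{th:rklb}, handled by an extra round of trivial cofactoring that cannot increase the rank.
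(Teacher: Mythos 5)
Your high-level strategy is the same as the paper's: restrict $F$ to the cofactors $H^i_{k,n}$, apply Theorem~\ref{th:struct} at a vtree node to get small disjoint rectangle covers for all of them with respect to a \emph{common} partition, lower-bound the cover size by the rank of a disjointness-type submatrix, and pigeonhole the crossings over the $k+1$ layers of the chain. But there are two genuine gaps. First, the inequality $\mathsf{rank}(\mathsf{cm}(H^i_{k,n},A_i,B_i))\geq 2^{s_i}-1$ is false if $s_i$ is literally ``the number of matched pairs $(l,m)$ whose two variables fall on opposite sides.'' For $H^0_{k,n}=\bigvee_{l,m}(x_l\wedge z^1_{l,m})$ the $n$ pairs in row $l$ all share the variable $x_l$; the function factors through at most $2n$ bits, so its rank never exceeds $2^{O(n)}$ even when all $n^2$ pairs are split. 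The rank argument needs a family of \emph{pairwise variable-disjoint} split pairs (one per row for $H^0$, one per column for $H^k$; for the interior layers disjointness is automatic). This asymmetry between the boundary layers and the interior layers is precisely what makes the combinatorial core delicate, and your accounting elides it.

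Second, that combinatorial core --- ``every balanced cut forces $\sum_i s_i=\Omega(n)$'' with $s_i$ correctly reinterpreted as a matching size --- is asserted, not proved; you yourself flag it as ``the hard part.'' Since this is where the entire difficulty of the theorem lives, the proposal does not yet constitute a proof. Your choice to balance over all $N=O(n^2)$ variables also makes this step harder than it needs to be: the cut may then avoid $X\cup Y$ entirely and sit inside the $Z$-layers, forcing exactly the awkward case analysis you describe. The paper instead balances only over the $2n$ endpoint variables (Claim~\ref{cl:find-node} finds $v$ with $2n/5\leq|X_v\cup Y_v|\leq 4n/5$), after which a clean dichotomy suffices: either some column $j$ has all of $\{z^1_{i,j}\colon x_i\in X_v\}$ outside $T_v$, in which case $H^0$ already contains a split copy of the complemented disjointness function on $n_x\geq n/5$ disjoint pairs (Claim~\ref{cl:h0-hard}); or every column meets $T_v$ in layer $Z^1$, in which case each of the $n-n_y\geq n/5$ indices $j$ with $y_j\notin T_v$ yields a chain entering $T_v$ at $z^1_{i,j}$ and exiting before $y_j$, and these crossings are distributed over the $k$ interior/terminal layers and aggregated by Jensen's inequality (Claim~\ref{cl:hkp-hard}). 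I would recommend either adopting that balancing step or supplying a complete proof of your chain-pigeonhole lemma with the matching-size correction.
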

\begin{proof}[of Theorem~\ref{th:main-ineq}]
Let $Q$ be a query with inequalities.  If $Q$ \lq\lq contains an inversion of length $k \geq 1$\rq\rq, 
then by Lemma~\ref{lemma:inv-impl-h} for every $n \geq 1$ there exist a lineage $F(X)$ of $Q$ on $O(n^2)$ variables and assignments $b_i \colon X_i \to \{0,1\}$ for $X_i \subseteq X$ and $i=0,1,\ldots,k$ 
such that $F(b_i,X \setminus X_i) \equiv H_{k,n}^i$.  

Let $C$ be a deterministic NNF of size $s$ computing $F$, structured by the vtree $T$.  
Then by the properties of deterministic structured NNFs {DBLP:conf/aaai/PipatsrisawatD08}, it holds that 
$C_i(X \setminus X_i)=C(b_i,X \setminus X_i)$ is a deterministic NNF of size $s_i \leq s$ that computes $H_{k,n}^i$ 
and is structured by $T$, for all $i=0,1,\ldots,k$.  By Lemma~\ref{lemma:main-inversion}, 
there exists $i \in \{0,1,\ldots,k\}$ such that $C_i$ has size $s_i=2^{\Omega(n/k)}$.  
Therefore $C$ has size $2^{\Omega(n/k)}$.
\end{proof}

The proof idea is that, if a query $Q$ \lq\lq contains inversions\rq\rq, 
then it has a lineage $L(Q,D)$ of which each $H^i_{k,n}$ is a cofactor ($i=0,1,\ldots,k$).  
If $C$ is a small deterministic form for $L(Q,D)$ respecting a vtree $T$, 
then small deterministic structured forms for each $H^i_{k,n}$, 
\emph{all} respecting the vtree $T$, can be mined from $C$ by suitably assigning its inputs. 
But this is impossible 
for communication complexity reasons (Lemma~\ref{lemma:main-inversion}).

\begin{lemma}\label{lemma:main-inversion}
For every vtree $T$ for $X \cup Y \cup Z$ and every family $\{C_0,\ldots,C_k\}$ of deterministic structured NNFs, 
where $C_i$ is structured by $T$ and computes $H^i_{k,n}$ ($i=0,1,\ldots,k$), 
there exists $i \in \{0,1,\ldots,k\}$ such that $C_i$ has size $2^{\Omega(n/k)}$.
\end{lemma}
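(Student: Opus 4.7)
The plan is to combine the rectangle-cover lower bound provided by Theorems~\ref{th:struct} and~\ref{th:rklb} with a balanced-separator plus pigeonhole argument on the vtree $T$. For each $i \in \{0,\ldots,k\}$ and every node $v \in T$, Theorem~\ref{th:struct} applied to $C_i$ produces a disjoint rectangle cover of $H^i_{k,n}$ of size at most $|C_i|$ whose rectangles have underlying partition $(V_v \cap W_i,\, (V \setminus V_v) \cap W_i)$, where $W_i = \var(H^i_{k,n})$ and $V = X \cup Y \cup Z$; Theorem~\ref{th:rklb} then yields $|C_i| \ge \mathsf{rank}(\mathsf{cm}(H^i_{k,n}, V_v \cap W_i, (V \setminus V_v) \cap W_i))$. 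It therefore suffices to exhibit some $i^*$ and some $v \in T$ under which this rank is $2^{\Omega(n/k)}$.

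I locate a balanced node $v^* \in T$ with respect to the set $X \cup Y$ of size $2n$ by descending greedily from the root of $T$, always into the child with the larger intersection with $X \cup Y$, and halting at the first node whose intersection drops to at most $4n/3$. Since $T$ is binary, this produces $v^*$ with $|V_{v^*} \cap (X \cup Y)| \in (2n/3,\, 4n/3]$. Writing $a = |V_{v^*} \cap X|$ and $b = |V_{v^*} \cap Y|$, the number of pairs $(l,m) \in [n]^2$ for which $x_l$ and $y_m$ lie on opposite sides of $V_{v^*}$ equals $a(n-b) + (n-a)b = (a+b)n - 2ab$; a direct minimization over the feasible region $a+b \in [2n/3,\, 4n/3]$, $a,b \in [0,n]$ shows this is at least $4n^2/9$. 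For each such opposite-sided pair $(l,m)$, the chain $x_l = z^0_{l,m},\, z^1_{l,m},\, \ldots,\, z^k_{l,m},\, z^{k+1}_{l,m} = y_m$ has endpoints on opposite sides of $V_{v^*}$, so some consecutive $(z^i_{l,m}, z^{i+1}_{l,m})$ is split between the two sides; pigeonholing over the $k+1$ possible \emph{crossing levels} yields an index $i^* \in \{0,\ldots,k\}$ and a set $S$ of at least $4n^2/(9(k+1))$ pairs $(l,m)$ whose chains split at level $i^*$.

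From $S$ I extract a variable-disjoint subfamily of split pairs of $H^{i^*}_{k,n}$. If $1 \le i^* \le k-1$, the pairs $\{(z^{i^*}_{l,m}, z^{i^*+1}_{l,m}) : (l,m) \in S\}$ are already pairwise variable-disjoint (distinct $(l,m)$ yield distinct coordinates in $Z^{i^*} \cup Z^{i^*+1}$), so $s := |S| \ge 4n^2/(9(k+1))$. If $i^* = 0$ (symmetrically $i^* = k$), pairs sharing an $l$ also share $x_l$, but each $l$ contributes at most $n$ pairs to $S$, so selecting one pair per $l$-value leaves $s \ge 4n/(9(k+1)) = \Omega(n/k)$ variable-disjoint split pairs. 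Setting to $0$ all variables of $H^{i^*}_{k,n}$ outside the $s$ chosen pairs restricts $H^{i^*}_{k,n}$ to the negation of the disjointness function on $s$ pairs, with the two coordinates of each pair on opposite sides of the cut. The communication matrix of this restriction is a submatrix of $\mathsf{cm}(H^{i^*}_{k,n}, V_{v^*} \cap W_{i^*}, (V \setminus V_{v^*}) \cap W_{i^*})$, and by (\ref{eq:rkdf}) together with the fact that rank changes by at most $1$ when one takes $J - M$, its rank is at least $2^s - 1$. Combining with the first paragraph gives $|C_{i^*}| \ge 2^s - 1 = 2^{\Omega(n/k)}$.

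The delicate step is the extraction of variable-disjoint split pairs when $i^* \in \{0,k\}$: distinct crossing pairs $(l,m)$ can share a common $x_l$ or $y_m$, and this sharing dilutes the $\Theta(n^2/k)$ available crossings down to only $\Theta(n/k)$ genuinely independent split pairs, which is precisely what fixes the exponent of the bound at $\Omega(n/k)$ rather than $\Omega(n^2/k)$. Everything else is a routine combination of balanced-separator, pigeonhole, and rank-based communication lower bounds.
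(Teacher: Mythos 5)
Your proof is correct, and it reaches the $2^{\Omega(n/k)}$ bound by a genuinely different combinatorial route than the paper, although both share the same skeleton: locate a node of $T$ that balances $X\cup Y$, then convert the size of some $C_i$ into a disjointness-rank lower bound across the cut via Theorems~\ref{th:struct} and~\ref{th:rklb} together with (\ref{eq:rkdf}). Where you diverge is in locating the hard index $i^*$. The paper (Claims~\ref{cl:h0-hard} and~\ref{cl:hkp-hard}) makes a case distinction at level $1$: either some column $j$ has every relevant $z^1_{i,j}$ outside $T_v$, in which case $C_0$ alone is already $2^{\Omega(n)}$-hard, or else one chain per index $j$ with $y_j\notin T_v$ can be chosen to enter $T_v$ at level $1$; these $n-n_y\geq n/5$ chains are classified by the level at which they first leave $T_v$, and Jensen's inequality extracts a level carrying $\Omega(n/k)$ of them. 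You instead count all $\Theta(n^2)$ pairs $(l,m)$ with $x_l$ and $y_m$ separated by the cut and pigeonhole over the $k+1$ crossing levels, which removes the case distinction entirely; the price, which you correctly isolate, is that at the boundary levels $i^*\in\{0,k\}$ distinct chains may share an endpoint variable, and thinning to one chain per endpoint collapses $\Theta(n^2/k)$ crossings to $\Theta(n/k)$ variable-disjoint split pairs --- the same exponent as the paper --- while at interior levels your argument actually yields the stronger bound $2^{\Omega(n^2/k)}$. Two technical touch-ups are advisable. First, Theorem~\ref{th:struct} yields a disjoint cover of the function computed by $C_i$ over \emph{all} of $X\cup Y\cup Z$ with partition $(V_v, V\setminus V_v)$, and projecting that cover onto $W_i=\var(H^i_{k,n})$ need not preserve disjointness; so either restrict the circuit first by setting the irrelevant variables to $0$ (as the paper does, which preserves structuredness and size), or argue via the full communication matrix, whose rank equals that of $\mathsf{cm}(H^{i^*}_{k,n}, V_v\cap W_{i^*}, (V\setminus V_v)\cap W_{i^*})$ because duplicating rows and columns does not change rank. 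Second, Theorem~\ref{th:rklb} is stated for balanced partitions, so it is cleanest to invoke the rank bound on the restricted $s$-versus-$s$ matrix and pass to the full matrix by the submatrix observation you already sketch.
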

\begin{proof}[of Lemma~\ref{lemma:main-inversion}]
We let $X_v$ denote the variables in $T_v \cap X$, 
$Y_v$ denote the variables in $T_v \cap Y$, 
and $Z_v$ denote the variables in $T_v \cap Z$.  

\begin{claim}\label{cl:find-node}
There exists $v \in T$ such that $2n/5 \leq |X_v \cup Y_v| \leq 4n/5$.
\end{claim}
\begin{proof}[Proof of Claim~\ref{cl:find-node}]
Let $n_v=|X_v \cup Y_v|$, for all $v \in T$.  
Let $v_1,\ldots,v_l$ be a root-leaf path in $T$ such that, 
letting $n_i=|X_{v_i} \cup Y_{v_i}|$, 
$$n_{i+1} \geq n_i/2$$
for all $i=1,\ldots,l-1$.  Let $i \in [l]$ be minimum such that 
$$n_i \leq |X \cup Y|/5\text{,}$$
so that 
$$|X \cup Y|/5 < n_{i-1}\text{.}$$
By construction,  
$$n_{i-1} \leq 2n_i \leq 2|X \cup Y|/5\text{.}$$
Hence, letting $v=v_{i-1}$, we have  
$$2n/5=|X \cup Y|/5 < |X_{v} \cup Y_{v}| \leq 2|X \cup Y|/5=4n/5\text{,}$$
and we are done.
\end{proof}

By Claim~\ref{cl:find-node}, let $v \in T$ be such that $2n/5 \leq |X_v \cup Y_v| \leq 4n/5$.  
Let $n_x=|X_v|$ and $n_y=|Y_v|$.  Assume without loss of generality that $n_x \geq n_y$; 
otherwise the argument is similar. It follows by the choice of $v$ that 
\begin{equation}\label{eq:nx-lb}
n_x \geq n/5\text{} 
\end{equation}
and that 
\begin{equation}\label{eq:n-ny-lb}
n-n_y \geq n-n_x \geq n-4n/5=n/5\text{.}
\end{equation}

We enter a case distinction.  The first case is covered by the following claim.

\begin{claim}\label{cl:h0-hard}
If there exists $j \in [n]$ such that for all $i \in [n]$ 
it holds that $x_i \in X_v$ implies $z^1_{i,j} \in T \setminus T_v$, 
then $C_0$ has size $2^{\Omega(n)}$.
\end{claim}

\begin{proof}[Proof of Claim~\ref{cl:h0-hard}]
For $j \in [n]$, let $Z^1_j=\{ z^1_{i,j} \colon x_i \in X_v \} \setminus T_v$.  
By hypothesis, there exists $j \in [n]$ such that $|Z^1_{j}|=n_x$.  Write $$C_0(X,Z^1)=C_0(X_v,X\setminus X_v,Z^1_j,Z^1\setminus Z^1_j)\text{.}$$  
Then 
$$C'_0(X_v,Z^1_j)=C_0(X_v,\{0\}^{X\setminus X_v},Z^1_j,\{0\}^{Z^1\setminus Z^1_j})$$ 
is a deterministic NNF structured by $T$ of size $|C'_0| \leq |C_0|$  \cite{DBLP:conf/aaai/PipatsrisawatD08}.  It follows from 
Theorem~\ref{th:struct} that $C'_0$ has a disjoint rectangle cover of size at most $|C'_0|$ 
where each rectangle has underlying partition $(X_v,Z^1_j)$.

By the choice of $v$, it holds that $X_v$ contains $n_x$ variables in $X$.  
For the sake of notation, say that $X_v=\{x_1,\ldots,x_{n_x}\}$, 
so that $Z^1_j=\{z^1_{1,j},\ldots,z^1_{n_x,j}\}$.  Then, since $C_0 \equiv H^0_{k,n}$, 
we have that 
$$C'_0 \equiv (x_1 \wedge z^1_{1,j}) \vee \cdots \vee (x_{n_x} \wedge z^1_{n_x,j})\text{.}$$
Note that $C'_0(X_v,Z^1_j)$ is the complement of the disjointness function $D_{n_x}(X_v,Z^1_j)$ in (\ref{eq:df}).  
Therefore the complement of the communication matrix of $C'_0$ relative to $(X_v,Z^1_j)$ 
is equal (up to a permutation of rows and columns) to the communication matrix of $D_{n_x}$, i.e., 
$$\mathsf{cm}(D_{n_x},X_v,Z^1_j)=1-\mathsf{cm}(C'_0,X_v,Z^1_j)\text{,}$$
where $1$ denotes the $2^{n_x} \times 2^{n_x}$ all-$1$ matrix.  Therefore, 
by (\ref{eq:rkdf}) and basic linear algebra,
\begin{align*}
2^{n_x} &= \mathsf{rank}(\mathsf{cm}(D_{n_x},X_v,Z^1_j))\\
        &= \mathsf{rank}(1-\mathsf{cm}(C'_0,X_v,Z^1_j))\\
        &\leq \mathsf{rank}(1)+\mathsf{rank}(\mathsf{cm}(C'_0,X_v,Z^1_j))\\
        &=1+\mathsf{rank}(\mathsf{cm}(C'_0,X_v,Z^1_j))\text{,}
\end{align*}
hence
\begin{equation}\label{eq:lb-compl-df}
\mathsf{rank}(\mathsf{cm}(C'_0,X_v,Z^1_j)) \geq 2^{n_x}-1\text{.} 
\end{equation}
Therefore, by Theorem~\ref{th:rklb} and (\ref{eq:lb-compl-df}), 
every disjoint rectangle cover of $C'_0$ into rectangles with underlying 
partition $(X_v,Z^1_j)$ contains at least $2^{n_x}-1$ rectangles.

Summarizing, $C'_0$ has a disjoint rectangle cover of size at most $|C'_0|$ 
where each rectangle has underlying partition $(X_v,Z^1_j)$, 
but every such rectangle cover contains at least $2^{n_x}-1$ rectangles.  
Hence $$|C_0| \geq |C'_0| \geq 2^{n_x}-1 \geq 2^{n/5}-1=2^{\Omega(n)}\text{,}$$
and we are done (recall (\ref{eq:nx-lb})).
\end{proof}

The second (and complementary) case is covered by the following claim.

\begin{claim}\label{cl:hkp-hard}
If for all $j \in [n]$ there exists $i \in [n]$ 
such that $x_i \in X_v$ and $z^1_{i,j} \in T_v$, 
then there exists $p \in [k]$ such that $C_p$ has size $2^{\Omega(n/k)}$.
\end{claim}

\begin{proof}[Proof of Claim~\ref{cl:hkp-hard}]  
Define a set $S$ 
of pairs $(i,j)$ as follows.  For each $j \in [n]$ such that $y_j \in T\setminus T_v$, 
choose $i \in [n]$ such that $z^1_{i,j} \in T_v$, and add $(i,j)$ to $S$.  Note that 
\begin{equation}\label{eq:s-y-rel}
|S|=n-n_y\text{.} 
\end{equation}

For each $p=1,\ldots,k-1$, let $R_p \subseteq S$ be such that $(i,j) \in R_p$ iff 
$z^1_{i,j},\ldots,z^p_{i,j} \in T_v$ and $z^{p+1}_{i,j} \in T\setminus T_v$.  Also, 
let $$R_k=S\setminus \bigcup_{i=1}^{k-1}R_i\text{.}$$  Note that $R_1,\ldots,R_k$ form a partition of $S$, 
so that 
\begin{equation}\label{eq:r-part}
|R_1|+\cdots+|R_k|=|S|\text{.} 
\end{equation}

We show that $$\sum_{p=1}^k |C_p| \geq k(2^{n/5k}-1)\text{,}$$
which implies that there exists $p \in [k]$ such that $|C_p| \geq 2^{n/5k}-1$, and we are done.

Let $p \in [k-1]$.  For all $(i,j) \in R_p$, it holds that 
$z^p_{i,j} \in T_v$ and $z^{p+1}_{i,j} \in T\setminus T_v$.  
Let 
$V^p=\{ z^p_{i,j} \colon (i,j)\in R_p \}$ 
and 
$V^{p+1}=\{ z^{p+1}_{i,j} \colon (i,j)\in R_p \}$.  Write
$$C_p(Z^p,Z^{p+1})=C_p(V^p,Z^p\setminus V^p, V^{p+1}, Z^{p+1}\setminus V^{p+1})$$
and let 
$$C'_p(V^p,V^{p+1})=C_p(V^p,\{0\}^{Z^p\setminus V^p}, V^{p+1}, \{0\}^{Z^{p+1}\setminus V^{p+1}})\text{,}$$
so that $C'_p$ is a deterministic NNF structured by $T$ of size $|C'_p| \leq |C_p|$  \cite{DBLP:conf/aaai/PipatsrisawatD08}.  It follows from 
Theorem~\ref{th:struct} that $C'_p$ has a disjoint rectangle cover of size at most $|C'_p|$ 
where each rectangle has underlying partition $(V^p,V^{p+1})$.

Since $C_p \equiv H^p_{k,n}$, 
we have that 
$$C'_p \equiv \bigvee_{(i,j) \in R_p} (z^p_{i,j} \wedge z^{p+1}_{i,j})\text{,}$$
and along the lines of Claim~\ref{cl:h0-hard} we obtain 
\begin{equation}\label{eq:r-p-l}
|C_p| \geq |C'_p| \geq 2^{|R_p|}-1 \text{.} 
\end{equation}
Similarly, we obtain 
\begin{equation}\label{eq:r-k-l}
|C_k| \geq 2^{|R_k|}-1 \text{.}
\end{equation}

Thus 
\begin{align*}
\sum_{p=1}^k |C_p| &\geq \sum_{p=1}^k (2^{|R_p|}-1) & \text{(\ref{eq:r-p-l}), (\ref{eq:r-k-l})}\\
                 &\geq \sum_{p=1}^k 2^{|R_p|}-k & \\
                 & \geq k2^{\sum_{p=1}^k|R_p|/k}-k & \\
                 &=k2^{|S|/k}-k & \text{(\ref{eq:r-part})} \\
                 &\geq k2^{n/5k}-k & \text{(\ref{eq:n-ny-lb}), (\ref{eq:s-y-rel})} 
\end{align*}
and we are done.  The third inequality holds by plugging 
the convex function $f(r)=2^r$ in Jensen's inequality $\sum_{p=1}^k f(r_p)/k \geq f(\sum_{p=1}^k r_p/k)$.
\end{proof}

Claim~\ref{cl:h0-hard} and Claim~\ref{cl:hkp-hard} imply the statement. 
\end{proof}

\section{Conclusion}\label{sect:concl}

We have related the circuit treewidth of a Boolean 
function with the width of its SDD implementation (and more generally 
its width in natural canonical classes of deterministic structured forms), 
and we have incorporated constant width 
SDDs and polynomial size SDDs in the panorama of query compilation 
for union of conjunctive queries with and without negations. 

The comparison of Theorem~\ref{th:main-ineq} and \cite[Theorem~4.6]{DBLP:conf/uai/BeameL15} 
reiterates the question about the relative succinctness of deterministic structured forms 
and SDDs \cite{DBLP:conf/uai/BeameL15,DBLP:conf/aaai/Bova16}.  As Beame and Liew observe, a natural candidate function 
for an exponential separation is the \emph{indirect access storage} (\emph{ISA}) function, 
which is known to have large OBDDs but whose small deterministic structured forms 
deviate substantially from the SDD syntax \cite[Section~6]{DBLP:conf/uai/BeameL15}.  However, 
as we prove in Appendix~\ref{app:isaproof}, 
ISA has small SDD size,  
which unfortunately leaves us with no candidates for a separation.

The canonical structured deterministic forms 
induced by factorized implicants, introduced in Section~\ref{sect:fimpldec}, 
deserve in our opinion both a direct investigation in the framework of the knowledge compilation map \cite{DBLP:journals/jair/DarwicheM02}, 
and a thorough comparison with the data structures used in \emph{factorized databases}, 
which are more than just reminiscent of structured deterministic forms \cite{Olteanu16}.

The question remains whether bounded circuit tree\-width lineages imply bounded OBDD width 
for UCQs with inequalities, 
as conjectured by Jha and Suciu \cite{DBLP:conf/icdt/JhaS12}.  

Another intriguing conjecture is that SDDs with OR gates of 
bounded fanin are quasipolynomially simulated by OBDDs.\footnote{Personal communication with Igor Razgon.} The containment of bounded width SDDs in polynomial size OBDDs, discussed in the introduction 
and obtained in Section~\ref{sect:part1}, imply a 
polynomial simulation of bounded width SDDs, 
which have indeed bounded fanin ORs, by OBDDs.  

\section*{Acknowledgments} This research was supported by the FWF Austrian Science Fund (Parameterized Compilation, P26200).

\appendix

\section{ISA Has Small SDD Size}\label{app:isaproof}

Let $k$ and $m$ be positive integers such that $2^k m=2^m$.   
The \emph{indirect access storage} (in short, \emph{ISA}) function on $n=k+2^km=k+2^m$ variables 
$$\mathrm{ISA}_n(y_1,\ldots,y_{k},x_{1,1},\ldots,x_{1,m},\ldots,x_{2^k,1},\ldots,x_{2^k,m})$$
also displayed as 
$$\mathrm{ISA}_n(Y_k,Z_m)=\mathrm{ISA}_n(y_1,\ldots,y_{k},z_{1},\ldots,z_{2^m})$$
accepts input 
$$a_1,\ldots,a_{k},b_{1,1},\ldots,b_{1,m},\ldots,b_{2^k,1},\ldots,b_{2^k,m}$$
also displayed as 
$$a_1,\ldots,a_{k},c_{1},\ldots,c_{2^m}$$
iff , 
letting $i-1 \in \{0,\ldots,2^k-1\}$ be the number 
whose binary representation is $(a_1,\ldots,a_{k})$ 
and $j-1 \in \{0,\ldots,2^m-1\}$ be the number 
whose binary representation is $(b_{i,1},\ldots,b_{i,m})$, 
it holds that $c_{j}=1$.

\begin{proposition}
$\mathrm{ISA}_n$ has SDD size $O(n^{13/5})$.
\end{proposition}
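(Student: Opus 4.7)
The plan is to exhibit an OBDD for $\mathrm{ISA}_n$ of size $O(n^2/\log n)$ under a carefully chosen variable order, and then invoke the fact that OBDDs are canonical SDDs respecting right-linear vtrees \cite{DBLP:conf/ijcai/Darwiche11}. Since $n^{13/5}/(n^2/\log n) = n^{3/5}\log n \to \infty$, the OBDD bound already implies SDD size $O(n^{13/5})$, which is strictly weaker than what the construction actually delivers.

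First, I would fix the variable order $y_1,\ldots,y_k$, then $x_{1,1},\ldots,x_{2^k,m}$, then $z_1,\ldots,z_{2^m}$, and describe the reduced OBDD by enumerating its Myhill--Nerode equivalence classes level by level. The $Y$-reading phase has $2^{l-1}$ distinguishable prefixes at level $l$ (different $Y$-prefixes lead to different future functions), contributing $O(2^k)$ nodes in total. At level $(i,l)$ of the $X$-reading phase (i.e.\ about to read the $l$-th bit of block $X_i$), an equivalence class has one of three types: (a) ``waiting for a later block $X_{i'}$ with $i' > i$'' (indexed by $i'$); (b) ``currently reading $X_i$ with an $(l-1)$-bit prefix $v$ seen so far'' (indexed by $v$); (c) ``$X_{i(Y)}$ already computed to some $v\in\{0,1\}^m$, awaiting $z_{\mathrm{bin}(v)+1}$'' (indexed by $v$). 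Reading $x_{i,l}$ is irrelevant to types (a) and (c), so both of their outgoing edges lead to the same next-level node; consequently these nodes are redundant and are replaced by skip edges in the reduction. Only type (b) survives, contributing $2^{l-1}$ nodes at level $(i,l)$, hence $2^m-1$ nodes per block $i$. During $Z$-reading, only the node ``awaiting $z_l$'' survives at level $l$, contributing $O(2^m)$ nodes in total.

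Second, I would sum the contributions. The total is $O(2^k) + 2^k(2^m-1) + O(2^m) = O(2^{k+m})$. From the constraint $2^k m = 2^m$ together with $n=k+2^m$, one has $2^m \approx n$ and $2^k \approx n/\log n$, so $2^{k+m} = O(n^2/\log n)$. Since OBDDs with variable order $x_1<\cdots<x_n$ are canonical SDDs respecting the right-linear vtree along that order \cite{DBLP:conf/ijcai/Darwiche11}, the above OBDD is an SDD of the same asymptotic size, which is $O(n^{13/5})$ as claimed.

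The main obstacle lies in rigorously justifying that types (a) and (c) are indeed absorbed by skip edges uniformly across the $X$-phase, including at block boundaries and at the transition from $X$-reading to $Z$-reading (where a new type-(c) state logically enters but is immediately absorbed). This is a routine but somewhat tedious Myhill--Nerode bookkeeping, and one must also check that distinct type-(b) prefixes remain pairwise distinguishable (which follows because different prefixes of $X_i$ point to different $z_j$ variables in every consistent completion).
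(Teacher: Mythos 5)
Your approach cannot work, and the failure is at the very first step. In $\mathrm{ISA}_n$ the pointer array and the data array are the \emph{same} variables: by definition $2^km=2^m$ and the tuple $x_{1,1},\ldots,x_{2^k,m}$ is just a relabelling of $z_1,\ldots,z_{2^m}$ (that is, $x_{i,l}=z_{(i-1)m+l}$). So the variable order ``$y_1,\ldots,y_k$, then $x_{1,1},\ldots,x_{2^k,m}$, then $z_1,\ldots,z_{2^m}$'' is not an order at all, and the three-phase Myhill--Nerode bookkeeping collapses. Concretely, your type-(a) states are not equivalent to one another: while the automaton is ``waiting for the later block $X_{i(Y)}$,'' the bits it is skipping are memory cells $z_1,\ldots,z_{(i(Y)-1)m}$, any one of which may turn out to be the addressed cell once the pointer block is finally read; hence the automaton must remember all of them, which forces $2^{(i-1)m}$ states rather than $O(1)$ per level. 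Likewise ``reading $x_{i,l}$ is irrelevant to type (c)'' is false exactly when the computed address $\mathrm{bin}(v)+1$ points at the current or an already-read cell. This is not a reparable bookkeeping issue: $\mathrm{ISA}$ is a standard example of a function with exponential OBDD size under \emph{every} variable order, and the paper relies on precisely this fact --- Beame and Liew proposed $\mathrm{ISA}$ as a candidate to separate deterministic structured NNFs from SDDs \emph{because} its OBDDs are large, and the point of the appendix is that it nonetheless has polynomial-size SDDs.

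The missing idea is that the small SDD must exploit a vtree that is genuinely not right-linear. The paper's construction takes a vtree whose $Y_k$-part is right-linear but whose $Z_m$-part is \emph{left}-linear, and implements each cofactor $\mathrm{ISA}_n(a_1,\ldots,a_k,z_1,\ldots,z_{2^m})$ as a sentential decision whose primes are ``small terms'' on $Z_m$ (conjunctions of at most $m+1$ literals); these small terms are then recursively decomposed down the left-linear spine (Claims~\ref{cl:base} and~\ref{cl:isarec}). The count $O(n^{13/5})$ comes from there being $O(n^{8/5})$ small terms times $O(n)$ structuring nodes --- it is not a deliberately weakened bound obtained from a stronger $O(n^2/\log n)$ one. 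Any correct proof must abandon the OBDD route and use the additional flexibility of vtrees in an essential way.
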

\begin{proof}
Let $T_n=T(Y_k,Z_m)$ be the vtree for variables $Y_k \cup Z_m$ 
formed by a right-linear subtree $T_n(Y_k)$ whose left leaves correspond to the variables in $Y_k$ 
and whose (unique) right leaf $v$ is the root of a left-linear subtree $T_n(Z_m)$ 
whose (unique) left leaf corresponds to $z_{1}$ and whose right leaves correspond (in a postorder traversal) to $z_2,\ldots,z_{2^m}$. 
For instance, the vtree $T_5(Y_1,Z_2)$ is depicted in Figure~\ref{fig:isaproof}.  

\begin{figure}[t]
\begin{center}
\begin{picture}(0,0)%
\includegraphics{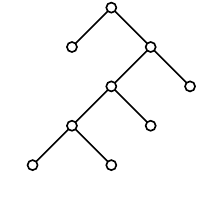}%
\end{picture}%
\setlength{\unitlength}{4144sp}%
\begingroup\makeatletter\ifx\SetFigFont\undefined%
\gdef\SetFigFont#1#2#3#4#5{%
  \reset@font\fontsize{#1}{#2pt}%
  \fontfamily{#3}\fontseries{#4}\fontshape{#5}%
  \selectfont}%
\fi\endgroup%
\begin{picture}(930,901)(11551,-9931)
\put(11566,-9871){\makebox(0,0)[lb]{\smash{{\SetFigFont{7}{8.4}{\familydefault}{\mddefault}{\updefault}{\color[rgb]{0,0,0}$z_1$}%
}}}}
\put(12106,-9871){\makebox(0,0)[lb]{\smash{{\SetFigFont{7}{8.4}{\familydefault}{\mddefault}{\updefault}{\color[rgb]{0,0,0}$z_2$}%
}}}}
\put(12286,-9691){\makebox(0,0)[lb]{\smash{{\SetFigFont{7}{8.4}{\familydefault}{\mddefault}{\updefault}{\color[rgb]{0,0,0}$z_3$}%
}}}}
\put(12466,-9511){\makebox(0,0)[lb]{\smash{{\SetFigFont{7}{8.4}{\familydefault}{\mddefault}{\updefault}{\color[rgb]{0,0,0}$z_4$}%
}}}}
\put(12286,-9196){\makebox(0,0)[lb]{\smash{{\SetFigFont{7}{8.4}{\familydefault}{\mddefault}{\updefault}{\color[rgb]{0,0,0}$v_4$}%
}}}}
\put(11791,-9331){\makebox(0,0)[lb]{\smash{{\SetFigFont{7}{8.4}{\familydefault}{\mddefault}{\updefault}{\color[rgb]{0,0,0}$y_1$}%
}}}}
\end{picture}%
\end{center}
\caption{Vtree for $\mathrm{ISA}_5$.}
\label{fig:isaproof}
\end{figure} 

We liberally identify the leaves of $T_n$ with their labels (so that we simply call $x$ 
the leaf of $T_n$ labelled by the variable $x$).  For $i \in [k]$, we let $w_i$ 
denote the node in $T_n$ whose left child is $y_i$; for $j \in [2^m]$, we let $v_j$ 
denote the node in $T_n$ whose right child is $z_j$.  

A term on $Z_m$ is a conjunction of literals on $Z_m$.\footnote{The empty term is denoted by $\top$ 
and a term containing both literals of a variable is denoted by $\bot$.}  
We call a term on $Z_m$ \emph{small} if it contains at most $m+1$ distinct variables.  
Note that the number of small terms on $Z_m$ is
\begin{equation}\label{eq:smallterms}
3^{m+1}+1=O(n^{8/5}) 
\end{equation}
since $m = \log_2(2^m)=\log_2(n-k) \leq \log_2 n$.  

We now construct an SDD $C$ computing $\mathrm{ISA}_n$ and respecting $T_n$, 
where each $\wedge$-gate structured by a node of the form $v_j$ ($j \in [2^m]$)
conjoins one small term on $Z_m$ and an input gate (namely, a literal on $v_j$ or a constant).  
As $C$ has at most $2n+2=O(n)$ input gates, 
it follows that the number of $\wedge$-gates in $C$ 
structured by nodes of the form $v_{j}$ ($j \in [2^m]$) is $$O(n^{13/5})\text{.}$$ 
Moreover, $C$ is such that the number of $\wedge$-gates 
structured by nodes of the form $w_i$ ($i \in [k]$) is
$$2^{k+1}-2=O(n)$$ 
since $k \leq \log_2 n$.  It follows immediately  
that $C$ has size polynomial in $n$.  
Indeed, as each $\wedge$-gate in $C$ contributes a constant number of $\vee$-gates, 
$C$ contains at most $O(n^{13/5})$ internal gates.  
Also, $C$ has at most $2n+2=O(n)$ input gates.  Hence $$|C|=O(n^{13/5})\text{.}$$

We now present the construction of the SDD $C$ implementing $\mathrm{ISA}_n$ and respecting $T_n$.  

The upper part of $C$ is isomorphic to an OBDD respecting the order $y_1<\cdots<y_k$ 
and having $2^k$ source gates.  Each such source, say $g_{a_1,\ldots,a_k}$, 
corresponds to the Boolean assignment $y_i \mapsto a_i$, $i \in [k]$,  
of the variables in $Y_k$ and implements the cofactor 
\begin{equation}\label{eq:isacof}
\mathrm{ISA}_n(a_1,\ldots,a_{k},z_{1},\ldots,z_{2^m}) 
\end{equation}
as an SDD respecting the vtree $T_n(Z_m)$, as follows.

We start observing that each cofactor in (\ref{eq:isacof}) is expressible as a sentential decision, respecting the root node 
of $T_n(Z_m)$ and involving only small terms on $Z_m$ (and literals on $z_{2^m}$).   

\begin{claim}\label{cl:base}
The function $\mathrm{ISA}_n(a_1,\ldots,a_{k},z_{1},\ldots,z_{2^m})$ 
is equivalent to a sentential decision $\bigvee_i(P_i \wedge S_i)$ of the form (\ref{eq:sdd-def}), 
where the $\wedge$-gates are structured by $v_{2^m} \in T_n$ 
and the $P_i$'s are small terms.
\end{claim}
\begin{proof}[of Claim~\ref{cl:base}]
We distinguish two cases.  

\medskip\noindent Case $a_1+\cdots+a_{k}=k$:  In this case, we have to implement $\mathrm{ISA}_n(1,\ldots,1,z_{1},\ldots,z_{2^m})$, 
which is equivalent to 
$$\bigvee_{j=1}^{2^m}( \textup{\lq\lq$x_{2^k,1},\ldots,x_{2^k,m} = j$\rq\rq} \wedge z_j)\text{;}$$
here $\textup{\lq\lq$x_{2^k,1},\ldots,x_{2^k,m} = j$\rq\rq}$ 
corresponds to the term 
$$L^{a_1}_{2^k,1} \wedge \cdots \wedge L^{a_m}_{2^k,m}$$ 
where $a_1\cdots a_m$ represents $j-1 \in \{0,\ldots,2^m-1\}$ in binary, 
and $L^{0}_{2^k,j'}=\neg x_{2^k,j'}$, $L^{1}_{2^k,j'}=x_{2^k,j'}$, $j' \in [m]$.  

For all $a \colon \{x_{2^k,1},\ldots,x_{2^k,m-1}\} \to \{0,1\}$ 
and $i \in [2^{m-1}]$ we say that \emph{$a$ orbits on $i$} 
if $$a(x_{2^k,1})\cdots a(x_{2^k,m-1})1$$ represents $2i-1$ in binary.  
If $a$ orbits on $i$, we let \lq\lq $a$ orbits on $i$\rq\rq\ denote the term 
$$L^{a(x_{2^k,1})}_{2^k,1} \wedge \cdots \wedge L^{a(x_{2^k,m-1})}_{2^k,m-1}\text{.}$$ 
By direct inspection,  
$\mathrm{ISA}_n(1,\ldots,1,z_{1},\ldots,z_{2^m})$ is equivalent to a sentential decision $\bigvee_i(P_i \underset{*}{\wedge} S_i)$ as in (\ref{eq:sdd-def}), namely,  
\begin{align*}
\bigvee_{(a,i)} 
\begin{cases} 
\left(\textup{\lq\lq$a$ orbits on $i$\rq\rq} \wedge \neg z_{2i-1} \wedge \neg z_{2i} \right) \underset{*}{\wedge} \bot \\ 
\left(\textup{\lq\lq$a$ orbits on $i$\rq\rq} \wedge \neg z_{2i-1} \wedge z_{2i} \right) \underset{*}{\wedge} z_{2^m} \\
\left(\textup{\lq\lq$a$ orbits on $i$\rq\rq} \wedge z_{2i-1} \wedge \neg z_{2i} \right) \underset{*}{\wedge} \neg z_{2^m} \\
\left(\textup{\lq\lq$a$ orbits on $i$\rq\rq} \wedge z_{2i-1} \wedge z_{2i} \right) \underset{*}{\wedge} \bot 
\end{cases}
\end{align*}
where $(a,i)$ ranges over all pairs such that $a$ orbits on $i$.  

Here, the interesting $\wedge$-gates (marked with $*$) are structured by the node $v_{2^m}$ in $T_n$.  
Moreover, the $P_i$'s are small terms as they contain $(m-1)+2=m+1$ variables by construction.

\begin{example}[$k=2$, $m=4$]\label{ex:hec} Assume $y_1=y_2=1$, so that we compute $\mathrm{ISA}_{18}(1,1,z_1,\ldots,z_{16})$.  
In this case $a \colon \{z_{13},z_{14},z_{15}\} \to \{0,1\}$ and $i \in \{0,1,\ldots,8\}$. The following lists 
the disjuncts corresponding to $\textup{\lq\lq$a$ orbits on $i$\rq\rq}$ for $i=0,1,2,3,4,6$ (we use $\overline{x}=\neg x$ 
and $xy=x \wedge y$ as shortenings):   
\begin{align*}
\bigvee 
\begin{cases} 
\overline{z_{13}}\overline{z_{14}}\overline{z_{15}}\overline{z_1}\overline{z_2} \underset{*}{\wedge} \bot \\ 
\overline{z_{13}}\overline{z_{14}}\overline{z_{15}}\overline{z_1}z_2 \underset{*}{\wedge} z_{16}\\ 
\overline{z_{13}}\overline{z_{14}}\overline{z_{15}}z_1\overline{z_2} \underset{*}{\wedge} \overline{z_{16}} \\ 
\overline{z_{13}}\overline{z_{14}}\overline{z_{15}}z_1 z_2 \underset{*}{\wedge} \top \\ 
\overline{z_{13}}\overline{z_{14}}z_{15}\overline{z_3}\overline{z_4} \underset{*}{\wedge} \bot \\ 
\overline{z_{13}}\overline{z_{14}}z_{15}\overline{z_3}z_4 \underset{*}{\wedge} z_{16} \\ 
\overline{z_{13}}\overline{z_{14}}z_{15}z_3\overline{z_4} \underset{*}{\wedge} \overline{z_{16}} \\ 
\overline{z_{13}}\overline{z_{14}}z_{15}z_3 z_4 \underset{*}{\wedge} \top \\ 
\vdots \\
z_{13}\overline{z_{14}}z_{15}\overline{z_{11}}\overline{z_{12}} \underset{*}{\wedge} \bot \\ 
z_{13}\overline{z_{14}}z_{15}\overline{z_{11}}z_{12} \underset{*}{\wedge} z_{16} \\ 
z_{13}\overline{z_{14}}z_{15}z_{11}\overline{z_{12}} \underset{*}{\wedge} \overline{z_{16}} \\ 
z_{13}\overline{z_{14}}z_{15}z_{11}z_{12} \underset{*}{\wedge} \top
\end{cases}
\end{align*} 
\end{example}

If $(a,i)$ gives $\{ z_{2i-1},  z_{2i} \} \subseteq \{x_{2^k,1},\ldots,x_{2^k,m-2}\}$, 
then the corresponding subdisjunction reduces to one disjunct only; 
and, if $(a,i)$ gives $\{ z_{2i-1},  z_{2i} \}=\{x_{2^k,m-1},x_{2^k,m}\}$, 
then the corresponding subdisjunction reduces to two disjuncts only, as the following example illustrates.

\begin{example}[$k=2$, $m=4$] Continuing Example~\ref{ex:hec}, the following lists 
the one disjunct corresponding to $\textup{\lq\lq$a$ orbits on $7$\rq\rq}$:   
\begin{align*}
z_{13}z_{14}\overline{z_{15}} \underset{*}{\wedge} \top
\end{align*} 
and the following lists the two disjuncts corresponding to $\textup{\lq\lq$a$ orbits on $8$\rq\rq}$:   
\begin{align*}
\bigvee 
\begin{cases} 
z_{13}z_{14}z_{15} \underset{*}{\wedge} \overline{z_{16}} \\ 
z_{13}z_{14}z_{15} \underset{*}{\wedge} z_{16}
\end{cases}
\end{align*} 
\end{example}

\medskip\noindent Case $a_1+\cdots+a_{k}<k$: Say that $a_1\cdots a_{k}$ represents $i<2^k-1$.  We implement $\mathrm{ISA}_n(a_1,\ldots,a_k,z_{1},\ldots,z_{2^m})$, 
which is equivalent to 
$$\bigvee_{j=1}^{2^m}( \textup{\lq\lq$x_{i,1},\ldots,x_{i,m} = j$\rq\rq} \wedge z_j)\text{,}$$
where the notation is as in the previous case; note that $z_{2^m} \not\in \{x_{i,1},\ldots,x_{i,m}\}$.  An equivalent sentential decision of the form (\ref{eq:sdd-def}) 
is obtained by disjoining 
$$\textup{\lq\lq$x_{i,1},\ldots,x_{i,m} = 2^m$\rq\rq} \underset{*}{\wedge} z_{2^m}$$
and the following:
\begin{align*}
\bigvee_{j=1}^{2^m-1} 
\begin{cases} 
\left(\textup{\lq\lq$x_{i,1},\ldots,x_{i,m} = j$\rq\rq} \wedge \neg z_j \right) \underset{*}{\wedge} \bot \\ 
\left(\textup{\lq\lq$x_{i,1},\ldots,x_{i,m} = j$\rq\rq} \wedge z_j \right) \underset{*}{\wedge} \top 
\end{cases}
\end{align*}
where the interesting $\wedge$-gates (marked with $*$) are structured by the node $v_{2^m}$ in $T_n$.  
Moreover, the $P_i$'s are small terms as they contain $m+1$ variables by construction.

If $\textup{\lq\lq$x_{i,1},\ldots,x_{i,m} = j$\rq\rq}$ and $z_j \in \{x_{i,1},\ldots,x_{i,m}\}$, 
then the corresponding pair of disjuncts 
simplifies, as the following example illustrates.

\begin{example}[$k=2$, $m=4$] Assume $y_1=0$ and $y_2=1$, 
so that we compute $\mathrm{ISA}_{18}(0,1,z_1,\ldots,z_{16})$ by the following sentential decision:   
\begin{align*}
\bigvee 
\begin{cases} 
(\textup{\lq\lq$z_5,\ldots,z_8 = 1$\rq\rq} \wedge \neg z_1) \underset{*}{\wedge} \bot \\
(\textup{\lq\lq$z_5,\ldots,z_8 = 1$\rq\rq} \wedge z_1) \underset{*}{\wedge} \top \\
\vdots \\
(\textup{\lq\lq$z_5,\ldots,z_8 = 4$\rq\rq} \wedge \neg z_4) \underset{*}{\wedge} \bot \\
(\textup{\lq\lq$z_5,\ldots,z_8 = 4$\rq\rq} \wedge z_4) \underset{*}{\wedge} \top \\
\overline{z_5}z_6 \overline{z_7} \overline{z_8} \underset{*}{\wedge} \bot \\
\overline{z_5}z_6 \overline{z_7} z_8 \underset{*}{\wedge} \top \\
\overline{z_5}z_6 z_7 \overline{z_8} \underset{*}{\wedge} \top \\
\overline{z_5}z_6 z_7 z_8 \underset{*}{\wedge} \top \\
(\textup{\lq\lq$z_5,\ldots,z_8 = 9$\rq\rq} \wedge \neg z_1) \underset{*}{\wedge} \bot \\
(\textup{\lq\lq$z_5,\ldots,z_8 = 9$\rq\rq} \wedge z_1) \underset{*}{\wedge} \top \\
\vdots \\
(\textup{\lq\lq$z_5,\ldots,z_8 = 15$\rq\rq} \wedge \neg z_{15}) \underset{*}{\wedge} \bot \\
(\textup{\lq\lq$z_5,\ldots,z_8 = 15$\rq\rq} \wedge z_{15}) \underset{*}{\wedge} \top \\
(\textup{\lq\lq$z_5,\ldots,z_8 = 16$\rq\rq}) \underset{*}{\wedge} z_{16}
\end{cases}
\end{align*} 
\end{example}

The claim is settled.\end{proof}

The construction implements each gate $g_{a_1,\ldots,a_k}$ 
by the sentential decision given by Claim~\ref{cl:base}.  We now claim 
that the construction can continue recursively by implementing 
the resulting $P_i$'s as SDDs respecting subtrees of the subtree of $T_n$ rooted at $v_{2^m-1}$.

\begin{claim}\label{cl:isarec}
Let $P$ be a small term on $Z_m$.
Then $P$ is equivalent to a sentential decision $\bigvee_i(P_i \wedge S_i)$ of the form (\ref{eq:sdd-def}), 
where the $\wedge$-gates are structured by some $v_j \in T_n$ ($j \in [2^m]$)
and the $P_i$'s are small terms.
\end{claim}
\begin{proof}[of Claim~\ref{cl:isarec}]
For $j \in [2^m]$, let $L^0_{j}= \neg z_j$, $L^1_{j}=z_j$, and $L_j \in \{L^c_{j} \colon c=0,1 \}$.  
Say that $P=P(z_{j_1},\ldots,z_{j_{l}})$ where $j_1<\cdots<j_{l-1}<j_l$ and $1<l\leq m+1$.

Let $a$ be the assignment of $\{z_{j_1},\ldots,z_{j_{l-1}}\}$ to $\{0,1\}$ such that 
$$P=\left( \left( L^{a(z_{j_1})}_{j_1} \wedge \cdots \wedge L^{a(z_{j_{l-1}})}_{j_{l-1}} \right) \underset{*}{\wedge} L_{j_l} \right)\text{.}$$
By direct inspection, $P$ is equivalent to the sentential decision $\bigvee_i(P_i \underset{*}{\wedge} S_i)$ of the form (\ref{eq:sdd-def}) 
$$P \vee \bigvee_{a \neq b}\left( \left( L^{b(z_{j_1})}_{j_1} \wedge \cdots \wedge L^{b(z_{j_{l-1}})}_{j_{l-1}} \right) \underset{*}{\wedge} \bot \right)$$
where $b$ ranges over the assignments of $\{z_{j_1},\ldots,z_{j_{l-1}}\}$ in $\{0,1\}$ distinct from $a$.

Here, the interesting $\wedge$-gates (marked with $*$) are structured by the node $v_{j_l}$ in $T_n$.  
Moreover, the $P_i$'s are trivially small terms, because $l-1 \leq l \leq m+1$ by hypothesis.
\end{proof}
The statement is proved.
\end{proof}

\end{document}